\documentclass[11pt,onecolumn,draftcls]{IEEEtran}
\usepackage{graphicx}
\usepackage{epsfig}
\usepackage{subfigure}
\usepackage{amssymb}
\usepackage{amsbsy}
\usepackage{amsmath}
\usepackage{cite}
\usepackage{pdfpages}












\newcommand{\E}{\ensuremath{\hbox{\textbf{E}}}}


%
\newtheorem{theorem}{Theorem}
\newtheorem{lemma}{Lemma}
\newtheorem{remark}{Remark}
\newtheorem{definition}{Definition}
\newtheorem{corollary}{Corollary}
\newtheorem{proposition}{Proposition}


\newcommand{\beq}{\begin{equation}}
\newcommand{\eeq}{\end{equation}}
\newcommand{\bea}{\begin{array}}
\newcommand{\ena}{\end{array}}
\newcommand{\bds}{\begin {itemize}}
\newcommand{\eds}{\end {itemize}}
\newcommand{\bdf}{\begin{definition}}
\newcommand{\blm}{\begin{lemma}}
\newcommand{\edf}{\end{definition}}
\newcommand{\elm}{\end{lemma}}
\newcommand{\bthm}{\begin{theorem}}
\newcommand{\ethm}{\end{theorem}}
\newcommand{\bprp}{\begin{prop}}
\newcommand{\eprp}{\end{prop}}
\newcommand{\bcl}{\begin{claim}}
\newcommand{\ecl}{\end{claim}}
\newcommand{\bcr}{\begin{coro}}
\newcommand{\ecr}{\end{coro}}
\newcommand{\bquest}{\begin{question}}
\newcommand{\equest}{\end{question}}


\newcommand{\larrow}{{\larrow}}




\def\urltilda{\kern -.15em\lower .7ex\hbox{\~{}}\kern .04em}

\begin{document}\title{Distributed Game Theoretic Optimization and Management of Multichannel ALOHA Networks}
\author{Kobi Cohen and Amir Leshem
\thanks{Kobi Cohen is with the Coordinated Science Lab, University of Illinois at Urbana-Champaign, IL 61801. Email: kobi.cohen10@gmail.com}
\thanks{Amir Leshem is with the Faculty of Engineering, Bar-Ilan University, Ramat-Gan, 52900, Israel.}
\thanks{Part of this research was presented at the International Workshop on Computational Advances in Multi-Sensor Adaptive Processing (CAMSAP), Dec. 2013.}
\thanks{This work was partially supported by ISF grant, no. 903/13.}
}
\date{}
\maketitle

\begin{abstract}
\label{sec:abstract}
The problem of distributed rate maximization in multi-channel ALOHA networks is considered. First, we study the problem of constrained distributed rate maximization, where user rates are subject to total transmission probability constraints. We propose a best-response algorithm, where each user updates its strategy to increase its rate according to the channel state information and the current channel utilization. We prove the convergence of the algorithm to a Nash equilibrium in both homogeneous and heterogeneous networks using the theory of potential games. The performance of the best-response dynamic is analyzed and compared to a simple transmission scheme, where users transmit over the channel with the highest collision-free utility. Then, we consider the case where users are not restricted by transmission probability constraints. Distributed rate maximization under uncertainty is considered to achieve both efficiency and fairness among users. We propose a distributed scheme where users adjust their transmission probability to maximize their rates according to the current network state, while maintaining the desired load on the channels. We show that our approach plays an important role in achieving the Nash bargaining solution among users. Sequential and parallel algorithms are proposed to achieve the target solution in a distributed manner. The efficiencies of the algorithms are demonstrated through both theoretical and simulation results.
\end{abstract}
%
\def\keywords{\vspace{.5em}
{\bfseries\textit{Index Terms}---\,\relax%
}}
\def\endkeywords{\par}
\keywords
Collision channels, multi-channel ALOHA, best-response dynamics, Nash equilibrium, Nash bargaining solution, potential games.
\section{Introduction}
\label{sec:introduction}

Random access schemes have been widely used for data transmission of a large number of users sharing a common channel. In multi-channel systems, the users transmit over orthogonal channels (i.e., sub-bands) using Orthogonal Frequency Division Multiple Access (OFDMA). Each channel can be a cluster of multiple carriers. A common way to increase user rates in multi-channel systems is to exploit the channel diversity using local channel state information (CSI). Recently, multi-channel systems have been studied extensively in wireless communication \cite{Zhao_Survey_2007, Kyasanur_2009_Capacity, Leshem_Multichannel_2012, Cohen_A_Game_2012, Cohen_Game_2013, Mo_2008_Comparison}.

In this paper we examine Medium Access Control (MAC) schemes used to enable a large number of users to co-exist in a typically low number of shared channels. We investigate multi-channel ALOHA networks, where users access a channel according to a specific transmission probability. Transmission is successful if only a single user transmits over a shared channel in a given time-slot. However, if two or more users transmit simultaneously  over the same channel, a collision occurs. ALOHA-based protocols are widely used in wireless communication primarily because of their ease of implementation and their random nature. Simple transmitters can randomly access a channel without a carrier sensing operation. Past and recent works on single and multi-channel ALOHA networks can be found in \cite{Shen_Stabilized_2002, Pountourakis_Analysis_1992, Bai_Aloha_2010, Liew_2009_Bounded, Ma_2009_Analysis} and references therein. In \cite{Shen_Stabilized_2002}, stability of multi-channel networks in which a single channel is chosen randomly (from a uniform distribution) for transmission among multiple channels is studied. In \cite{Pountourakis_Analysis_1992}, a multi-channel ALOHA model, in which a single channel is used for transmissions of new packets and other channels for retransmissions, was analyzed. A Price of Anarchy (PoA) of Nash equilibria in multi-channel ALOHA networks is studied in \cite{Bai_Aloha_2010}. Queuing delay analysis for a single-channel ALOHA is provided in \cite{Liew_2009_Bounded}. Analysis of a generalized ALOHA protocol under adversarial environments is given in \cite{Ma_2009_Analysis}.

In wireless communication networks, distributed algorithms are generally preferred over centralized solutions. In this paper we mainly focus on distributed algorithms in multi-channel ALOHA networks. We examine distributed algorithms with dynamic systems where users make autonomous decisions based on local information. Such techniques have been presented in the literature. A related work on distributed optimization in cognitive radio networks can be found in \cite{Shi_Distributed_2008, Ma_Centralized_2009, Wu_Distributed_2009}. The problem of distributed learning in cognitive radio networks using multi-armed bandit technique with distributed multiple players was investigated in \cite{Liu_2010_Distributed_ICASSP}, where the number of channels is greater than the number of users and users implement carrier sensing operation before transmission. However, in this paper we adopt the ALOHA protocol for transmissions and the number of users is typically greater than the number of channels. The problem of multi-radio multi-channel allocation was investigated in \cite{Felegyhazi_Non_2007, Altman_A_Potential_2009, Vallam_Non_2011, Bai_Aloha_2010}. In \cite{Altman_A_Potential_2009}, a distributed learning algorithm was proposed that converges in some special cases. In the multi-radio multi-channel allocation model, the utility of each channel decreases with the number of radios transmitting over it. This is generally done by a TDMA protocol, for instance, among users who transmit over the same channel. As a result, users are encouraged to spread resources over channels. In this paper, however, the achievable rate of a user on a channel increases with the transmission probability (based on the ALOHA-network model) which results in strategies that allocate more resources on better channels. In \cite{Choe_2010_OFDMA}, the multi-channel ALOHA protocol in cognitive radio networks was analyzed, where the focus is on a hierarchical model of primary and secondary users in the network. The secondary users choose randomly one of the idle channels for transmission. In this paper, however, we focus on the open sharing model among users (e.g., ISM band), in which users exploit local information to choose better channels for transmissions. In \cite{Bai_Opportunistic_2006, To_2010_Exploiting}, the opportunistic multi-channel ALOHA scheme was analyzed for i.i.d Rayleigh fading channels. In this scheme, a user transmits over channels with instantaneous gains greater than some threshold. In this paper, however, long-term rates are assumed (i.e., mean-rates) and the interference caused by other users is also taken into consideration when designing effective algorithms for the spectrum access problem.

There is a significant amount of work in wireless networking that make use of game theory. Related works on networking games can be found in \cite{Altman_2006_Survey, Menache_2011_Network, Park_2012_Theory}. Random access games were studied in \cite{Mackenzie_Selfish_2001, Jin_Equilibria_2002, Meshkati_2006_Game, Menache_Rate_2008, Inaltekin_2008_Analysis, Candogan_Competitive_2009, Chen_2010_Random}. Game theoretic techniques were used in \cite{Mackenzie_Selfish_2001, Jin_Equilibria_2002, Menache_Rate_2008, Inaltekin_2008_Analysis, Candogan_Competitive_2009} to analyze single-channel ALOHA networks. In \cite{Mackenzie_Selfish_2001, Jin_Equilibria_2002, Menache_Rate_2008, Candogan_Competitive_2009}, distributed optimization algorithms of single-channel ALOHA networks using game theoretic tools are studied, where the utility of each user increases with the transmission probability. Here, we consider a similar model. Specifically, in \cite{Jin_Equilibria_2002, Menache_Rate_2008} energy-efficient Nash equilibria under user-rate demands have been established. However, the analysis of the energy-efficient equilibria does not hold under the multi-channel setting. Here, we extend this model to a multi-channel setting and study a distributed optimization of the user rates under constraints on the transmission probabilities. Another related work considered a non-cooperative power control game in multichannel networks with energy-efficiency perspectives \cite{Meshkati_2006_Game}, where the goal is to maximize the number of reliable bits transmitted per joule of energy consumed in a distributed fashion. In this paper, however, we focus on efficiency and fairness with respect to the achievable rates across users.

Cooperative game theory has been widely used to study channel sharing problems in wireless communication networks. In a non-cooperative game, players individually attempt to maximize their own utility regardless of the utility achieved by other players. On the other hand, in a cooperative game, players bargain with each other. If an agreement is reached, they act according to the agreement. If they disagree, they do not cooperate \cite{Owen_Game_1995}. An efficient solution for cooperative games is the Nash Bargaining Solution (NBS) \cite{Nash_1950_Bargaining}. In recent years, the NBS has been analyzed for the frequency flat \emph{interference} channel in the SISO \cite{Leshem_2006_Brgaining, Boche_2007_Non}, MISO \cite{Jorswieck_2008_Miso, Gao_2008_Game} and MIMO cases \cite{Nokleby_2007_Cooperative}, as well as for a frequency selective interference channel \cite{Han_2005_Fair, Leshem_2008_Cooperative, Leshem_2009_Game, Leshem_2010_Distributed, Leshem_2011_Smart}. In this paper, however, we apply cooperative game theoretic techniques to analyze the efficiency of our approach for the channel sharing problem over \emph{collision} channels in multi-channel ALOHA networks.

In our previous work \cite{Cohen_A_Game_2012, Cohen_Game_2013} we mainly focused on networks containing homogeneous users, where all users have the same transmission probability constraint. However, in this paper we focus on more general heterogenous networks, where each user in the network is allowed to transit with a different probability. Handling such cases creates additional challenges when designing effective protocols for the system. First, fairness should be considered when defining the target solutions for all users. Second, further refinements of the user dynamics are required to stabilize the system.

First, we consider the case where heterogenous users exploit their own CSI and the channel utilization to increase their utility, where each user in the network has an individual transmission probability constraint. We present the best-response algorithm that solves the distributed rate maximization. A best-response approach is a common method in non-cooperative games to achieve a Nash Equilibrium Point (NEP) \cite{Fudenberg_Game_1991, Yu_Distributed_2002, Scutari_asynchronous_2008}. The idea of best-response dynamics is that every user produces its best response in terms of the current state of all other users. Here, users need to decide which channels to access to improve their utility. The proposed best-response dynamics in this paper enable users to make autonomous decisions using their local CSI and by monitoring the load on the channels. We show that users' dynamic behavior obeys a global potential function \cite{Monderer_Potential_1996}, which implies the convergence of the dynamics.

Next, we study a simpler transmission scheme where users transmit over the channel with the highest collision-free utility (i.e., the utility that the user receives conditioned on the event that the channel is free), which is an approximate solution to the best-response dynamics as $N$ increases. The performance of the best-response dynamic are analyzed as compared to this simple transmission scheme for a finite $N$, which serves as a benchmark of the performance that could be obtained by exploiting the channel utilization. We also propose a centralized log-concave optimization problem to determine the transmission probabilities of heterogeneous users under this setting.

Finally, we consider the case where users are not restricted by a transmission probability constraint. Users are required to implement a distributed rate maximization under uncertainty since the transmission probabilities of the other users are unknown. In this case, fairness must be taken into consideration when formulating the target solution for all users. We examine the problem from a cooperative game theoretic perspective. We suggest a distributed learning scheme, where users adjust their transmission probability based on local information only to achieve the desired load on the channels to maximize their rates. We show that our approach plays an important role in achieving NBS among users. We propose sequential and parallel algorithms to reach the target solution in a distributed manner. The efficiencies of the algorithms are demonstrated through both theoretical and simulation results. Specifically, we show that the global NBS of the network can be achieved by both the sequential and parallel algorithms under mild conditions on user utilities.

The rest of this paper is organized as follows. In Section \ref{sec:network} we present the network model for the multi-channel ALOHA system. In Section \ref{sec:rate_maximization} we focus on distributed dynamics for the distributed rate maximization problem under given transmission probability constraints. In Section \ref{sec:csi} we focus on simpler solutions to rate maximization using CSI alone. In Sections \ref{sec:game} and \ref{sec:distributed} we discuss cooperative game considerations and distributed algorithms for the rate maximization problem under uncertainty of user transmission probabilities. In Section \ref{sec:simulations} we provide simulation results to demonstrate the algorithms performance.

\section{Network Model}
\label{sec:network}
We consider a wireless network containing $N$ users who transmit over $K$ orthogonal channels, where $N>K$. The users transmit over the shared channels using the slotted ALOHA protocol.
In each time slot each user is allowed to access a single channel according to a specific transmission probability. Transmission is successful if only a single user transmits over a shared channel in a given time-slot. However, if two or more users transmit simultaneously  over the same channel, a collision occurs. We assume that users are backlogged, i.e., all $N$ users always have packets to transmit.
The achievable rate of user $n$ at channel $k$ given that the channel is free, referred to as collision-free utility, is denoted by $u_n(k)\geq 0$ and is proportional to the bandwidth of channel $k$. For convenience, we define $u_n(0)=0 \;, \;\forall n$ as a virtual zero-rate channel. Transmitting over channel $k=0$ refers to no-transmission. Throughout the paper, it is assumed that the collision-free utilities are fixed during the running-time of the algorithm (i.e, $u_n(k)$ represents the mean-rate or long-term rate where the channel statistics change slowly). It is assumed that every user knows its own collision-free utility, while collision-free utilities of other users are unknown.
The collision-free rate matrix of all $N$ users in all $K+1$ channels is given by:
\beq
\label{eq:utility_matrix}
\bea{l}
\mathbf{U} \triangleq
\left[ \begin{matrix}
u_1(0) & u_1(1) & u_1(2) & \cdots & u_1(K) \\
u_2(0) & u_2(1) & u_2(2) & \cdots & u_2(K) \\
: & & & & \\
u_N(0) & u_N(1) & u_N(2) & \cdots & u_N(K)
\end{matrix} \right]  \;.
\ena
\eeq
Let $p_n(k)$ be the probability that user $n$ transmits over channel $k$.
Let $\mathcal{P}_n$ be the set of all transmission probability vectors of user $n$ in all $K+1$ channels.
A transmission probability vector $\mathbf{p}_n\in \mathcal{P}_n$ of user $n$ is given by:
\beq
\label{eq:utility_vector}
\bea{l}
\mathbf{p}_n \triangleq
\left[ \begin{matrix}
p_n(0) & p_n(1) & p_n(2) & \cdots & p_n(K)
\end{matrix} \right]  \;.
\ena
\eeq
Since we are mainly interested in high-loaded systems, where the number of users is greater (or even much greater) than the number of channels, it is desirable to limit the congestion level over the channels. Thus, we consider only single-channel strategies, where every user selects a single channel for transmission:
\beq
\label{eq:single_channel_strategy}
\bea{l}
p_n(k)=
\left\{ \begin{matrix}
1-x_n \;, & \mbox{if $k=0$}    \\
x_n \;, & \mbox{if $k=k_n$}    \\
0 \;, & \mbox{otherwise}
\end{matrix} \right.  \;,
\ena
\eeq
for some $k_n\in\left\{1, 2, ..., K\right\}$, and $0\leq x_n\leq 1$ for all $n$. We define $\mathcal{P}$ as the set of all transmission probability matrices of all $N$ users in all $K+1$ channels.
The probability matrix $\mathbf{P}\in \mathcal{P}$ is given by:
\beq
\label{eq:probability_matrix}
\bea{l}
\mathbf{P} \triangleq
\left[ \begin{matrix}
p_1(0) & p_1(1) & p_1(2) & \cdots & p_1(K) \\
p_2(0) & p_2(1) & p_2(2) & \cdots & p_2(K) \\
: & & & & \\
p_N(0) & p_N(1) & p_N(2) & \cdots & p_N(K)
\end{matrix} \right]  \;,
\ena
\eeq
where $\sum_{k=0}^{K}{p_n(k)}=1 \; \forall n$.  \\
We define $\mathcal{P}_{-n}$ as the set of all probability matrices of all $N$ users in all $K+1$ channels, except user $n$.
The probability matrix $\mathbf{P}_{-n}\in \mathcal{P}_{-n}$ is given by:
\beq
\label{eq:probability_matrix_N-1}
\bea{l}
\mathbf{P}_{-n} \triangleq
\left[ \begin{matrix}
p_1(0) & p_1(1) & p_1(2) & \cdots & p_1(K) \\
: & & & & \\
p_{n-1}(0) & p_{n-1}(1) & p_{n-1}(2) & \cdots & p_{n-1}(K) \\
p_{n+1}(0) & p_{n+1}(1) & p_{n+1}(2) & \cdots & p_{n+1}(K) \\
: & & & & \\
p_N(0) & p_N(1) & p_N(2) & \cdots & p_N(K)
\end{matrix} \right]  \;.
\ena
\eeq

When user $n$ perfectly monitors the $k^{th}$ channel utilization\footnote{Practically, the number of idle time slots and busy time slots can be used to estimate the success probability. Monitoring the channels can be done by the receiver (which can sense the spectrum from time to time and send this information to the transmitter). Another way is to monitor the null period by the transmitter as in cognitive radio systems. Any attempt to access channel $k$ by one user or more results in identifying channel $k$ as busy.}, it observes:
\beq
\label{eq:v}
\bea{l}
\displaystyle v_n(k)\triangleq \prod_{i\neq n}{\left( 1-p_i(k)\right)}=1-q_n(k)
  \;,
\ena
\eeq
which is the success probability of user $n$ on channel $k$. Roughly speaking, $q_n(k)$ can be viewed as the load that user $n$ observes on channel $k$. Increasing $q_n(k)$ decreases the rate that user $n$ can achieve over channel $k$.   \\
We further define
\beq
\label{eq:b}
\bea{l}
\displaystyle b(k)\triangleq \prod_{i=1}^{N}{\left( 1-p_i(k)\right)}
  \;,
\ena
\eeq
which is the probability that channel $k$ is not used by the users.\\
The expected rate of user $n$ in the $k^{th}$ channel is given by:
\beq
\label{eq:r}
\bea{l}
\displaystyle r_n(k)\triangleq u_n(k)v_n(k)
  \;.
\ena
\eeq
Hence, the expected rate of user $n$ is given by:
\beq
\label{eq:R}
\bea{l}
\displaystyle R_n \triangleq R_n(\mathbf{p}_n, \mathbf{P}_{-n}) = \sum_{k=1}^{K}{p_n(k)r_n(k) }
  \;.
\ena
\eeq
\section{The Distributed Rate Maximization Problem}
\label{sec:rate_maximization}
In this section we extend the results reported in \cite{Cohen_A_Game_2012, Cohen_Game_2013} for the special case of a homogenous network to the general case of a heterogenous network, where every user may have a different probability constraint. Throughout this section we consider a non-cooperative setting in the sense that every user maximizes its own rate under a constraint on the allowed transmission probability. Thus, the constraints on the attempt probabilities are used to prioritize users in the network\footnote{Similar problems for a single-channel ALOHA system were considered in \cite{Menache_Rate_2008, Jin_Equilibria_2002}, where users adjust their transmission probabilities subject to an individual rate demand. A similar approach is used in the rate-adaptive problem over interference channels in OFDM systems in which every user maximizes its own rate under a constraint on its allowed transmission power \cite{Yu_Distributed_2002})}. A question of interest under this setting is whether the system keeps oscillating due to frequent channel switching, or whether the system converges to a stable operating point (i.e., when no user can increase its rate by unilaterally switching channels). Throughout this section we addresses this question. We use the theory of potential games for purposes of convergence analysis.

We are interested in solving the distributed rate maximization problem, where each user tries to maximize its own expected rate subject to a total transmission probability constraint:
\beq
\label{eq:optimization_problem}
\bea{llll}
\displaystyle\max_{\mathbf{p}_n} \hspace{0.5cm} & R_n \hspace{0.5cm} & \mbox{s.t.} \hspace{0.5cm} & \displaystyle\sum_{k=1}^{K}{p_n(k)}\leq P_n
  \;.
\ena
\eeq
Since we are mainly interested in high-loaded systems, throughout the paper we restrict users to select at most a single channel for transmission (to reduce the collision level). Thus, $P_n<1$. Note that when user $n$ solves (\ref{eq:optimization_problem}) given the current system state, the resulting strategy is given by:
\beq
\label{eq:3_2_pure_NEP}
\bea{l}
p_n(k)=
\left\{ \begin{matrix}
1-P_n \;, & \mbox{if $k=0$}    \\
P_n \;, & \mbox{if $k=k_n^*$}    \\
0 \;, & \mbox{otherwise}
\end{matrix} \right.  \;,
\ena
\eeq
where\footnote{For the ease of presentation, we assume continuous random rates $u_n(k)$ to guarantee a uniqueness of the maximizer. Otherwise, channels with the same rate can be ordered arbitrarily.} $k_n^*=\displaystyle\arg \; \max_k\left\{ r_n(k) \right\}$, where $r_n(k)$ is defined in (\ref{eq:r}). Thus, $k_n^*$ denotes the best channel for user $n$ when its instantaneous $K$-channel utility vector is $\left[u_n(1),..., u_n(K)\right]$ and the channel utilization vector is $\left[v_n(1),..., v_n(K)\right]$.

Note that in practical systems, $\mathbf{u}_n$ is generally estimated from a pilot signal. On the other hand, complete information on matrix $\mathbf{P}_{-n}$ is not required. Knowing the channel utilization to obtain $v_n(k)$ is sufficient to make a decision.  \\
The probability matrix $\mathbf{P}$ is called the multi-strategy matrix and contains all the users' strategies, whereas $\mathbf{P}_{-n}$ is the multi-strategy matrix containing all users' strategies except the strategy of user $n$.

In the following, we define the non-cooperative multi-channel ALOHA game\footnote{This definition extends the non-cooperative multi-channel ALOHA game, defined in \cite{Cohen_A_Game_2012, Cohen_Game_2013} for homogenous users, to the general case of heterogenous users.}: \vspace{0.1cm}\\
\begin{definition}\label{def:MCA}
The non-cooperative multi-channel ALOHA (MCA) game is given by $\Gamma_{MCA}(K, P_1, P_2, ..., P_N)=\left( \mathcal{N}, \mathcal{P}, R \right)$, where $\mathcal{N}=\left\{1, 2, ..., N \right\}$ denotes the set of players (or users), $\mathcal{P}$ denotes the set of multi-strategy matrices, such that $\sum_{k=1}^{K}{p_n(k)}\leq P_n\leq 1$ for all $n\in \mathcal{N}$. $R: \mathcal{P}\rightarrow \mathbb{R}^N$, given in (\ref{eq:R}), denotes the payoff (i.e., rate) function. \vspace{0.1cm}
\end{definition}
When users cannot increase their rates by unilaterally changing their strategy, an equilibrium is obtained. \vspace{0.1cm}
\begin{definition}
A multi-strategy matrix $\mathbf{P}^*=\left[ \left(\mathbf{p}_1^*\right)^T \; \left(\mathbf{p}_2^*\right)^T \; ... \; \left(\mathbf{p}_N^*\right)^T\right]^T$ is a Nash Equilibrium Point (NEP) for the distributed rate maximization problem (\ref{eq:optimization_problem}) if
\beq
\label{eq:NEP}
\bea{l}
\displaystyle R_n(\mathbf{p}^*_n, \mathbf{P}^*_{-n}) \geq
\displaystyle R_n(\tilde{\mathbf{p}}_n, \mathbf{P}^*_{-n})  \;\; \forall n \;, \forall \tilde{\mathbf{p}}_n\in \tilde{\mathcal{P}}_n
  \;.
\ena
\eeq
where $\tilde{\mathcal{P}}_n$ denotes the set of transmission probability vectors that satisfy the constraint $\sum_{k=1}^{K}{p_n(k)}\leq P_n$.
\end{definition}

\subsection{Best-Response Dynamics}
\label{ssec:best}
Here, we propose a best-response dynamics to solve the distributed rate maximization problem. We initialize the algorithm by a simple solution where every user picks the channel with the highest collision-free utility $u_n(k)$. In the learning process step, each user monitors the channel utilization to obtain $v_n(k)$ for all $k$. Then the user updates its strategy by selecting the channel with the maximal achievable rate $r_n(k)=u_n(k)v_n(k)$ based on the estimated load.

In the best-response dynamics users can change their selected channels according to the dynamic load. In this section we show that the dynamics converge. In the following we use the theory of potential games to show that any sequential updating dynamics across users of the proposed best-response algorithm converges in finite time, starting from any point. In potential games, users' encouragement to change their strategy obeys a global potential function. Any local maximum of the potential function is a NEP of the game. In Theorem \ref{th:ordinal}, we show that $\Gamma_{MCA}(K, P_1, ..., P_N)$ is an ordinal potential game, where the utility of a player increases by unilaterally changing its strategy, if and only if the potential function increases. For the following definition $\mathcal{N}, \mathcal{P}$ are given in Definition \ref{def:MCA} and $\psi=(\psi_1, ..., \psi_N)$ is a payoff function for the $N$ users.
\\
\begin{definition}[{\cite{Monderer_Potential_1996}}]
A game $\Gamma=\left( \mathcal{N}, \mathcal{P}, \psi\right)$ is an ordinal potential game if there is an ordinal potential function $\phi : \mathcal{P}\rightarrow \mathbb{R}$ such that for every user $n\in \mathcal{N}$ and for every $\mathbf{P}_{-n}\in \mathcal{P}_{-n}$ the following holds:
\beq
\label{eq:ordinal_potential_def}
\bea{l}
\vspace{0.1cm}\displaystyle \psi_n(\mathbf{p}_n^{(2)}, \mathbf{P}_{-n})-\displaystyle \psi_n(\mathbf{p}_n^{(1)}, \mathbf{P}_{-n})>0 \iff \\ \vspace{0.1cm} \hspace{2cm}
\displaystyle \phi(\mathbf{p}_n^{(2)}, \mathbf{P}_{-n})-\displaystyle \phi(\mathbf{p}_n^{(1)}, \mathbf{P}_{-n})>0
\;, \\ \hspace{4cm}
\forall \mathbf{p}_n^{(1)}, \mathbf{p}_n^{(2)} \in \tilde{\mathcal{P}}_n \;,
\ena
\eeq
where $\tilde{\mathcal{P}}_n$ denotes the set of transmission probability vectors that satisfy the constraint $\sum_{k=1}^{K}{p_n(k)}\leq P_n$.
\end{definition}
%
\textsl{\theorem\label{th:ordinal}{
The non-cooperative multi-channel ALOHA (MCA) game $\Gamma_{MCA}(K, P_1, ..., P_N)$ is an ordinal potential game, with the following bounded ordinal potential function:
\beq\label{eq:ordinal_potential_function}
\bea{l}
\vspace{0.1cm}
\phi(\mathbf{P}) = \displaystyle\sum_{n=1}^{N}{\sum_{k=1}^{K}{\log\left(\frac{1}{1-P_n}\right)}}\times \\ {{ \hspace{2cm}\displaystyle\left( \log\left(u_n(k)\right)-\frac{L(k)+\log\left(\frac{1}{1-P_n}\right)}{2} \right)\mathbf{1}_n(k)}}
\;,
\ena
\eeq
where
\beq\label{eq:indicator}
\mathbf{1}_n(k)=
\left\{ \begin{matrix}
1 & , & \mbox{if $p_n(k)=P_n$}  \\
0 & , & \mbox{otherwise}
\end{matrix} \right.
\eeq
is the indicator function, which indicates whether user $n$ is trying to access channel $k$, and
\beq\label{eq:L}
L(k)=\displaystyle\sum_{n=1}^{N}{\log\left(\frac{1}{1-P_n}\right)\mathbf{1}_n(k)} \;.
\eeq
}}\vspace{0.1cm} \\
%
\begin{proof}
To prove the theorem we modify the distributed rate maximization problem (\ref{eq:optimization_problem}). Since every user selects a single channel for transmission (and $\sum_{k=1}^K p_n(k)\leq P_n$), (\ref{eq:optimization_problem}) is equivalent to the following optimization problem:
\begin{center}
$\bea{l}
\vspace{0.1cm}
\displaystyle\max_{k\in\left\{1, ..., K\right\}} \;\; R_n \;\; \mbox{s.t}\;\; p_n(k)=P_n  \;.
\ena$
\end{center}
Note that the constraint $p_n(k)=P_n$ implies $\mathbf{1}_n(k)=1$ (and also implies $p_n(0)=1-P_n$, $p_n(k')=0 \;\forall k'\neq 0, k$). As a result, for every $k$, we can multiply the objective by a constant $\mathbf{1}_n(k)\left[(1-P_n)/P_n\right]=(1-P_n)/P_n$ without affecting the solution's argument. Hence, using the monotonicity of the logarithm, (\ref{eq:optimization_problem}) is equivalent to the following optimization problem:
\beq
\bea{l}
\vspace{0.1cm}
\displaystyle\max_{k\in\left\{1, ..., K\right\}} \;\; \log\left(u_n(k)\right) -L(k) \;\; \mbox{s.t}\;\; p_n(k)=P_n\;.
\ena
\eeq
We further define:
\beq\label{eq:R_n_hat}
\bea{l}
\displaystyle\psi_n(\mathbf{p}_n, \mathbf{P}_{-n})=\psi_n(k, \mathbf{P}_{-n})\triangleq\tilde{u}_n(k)-L(k) \;,
\ena
\eeq
where $\mathbf{p}_n$ is determined by the chosen channel $k$ and $\tilde{u}_n(k)=\log\left(u_n(k)\right)$.\\
Next, assume that user $n_0$ selects channel $k_1$ according to strategy $\mathbf{p}_{n_0}^{(1)}$ and changes its strategy by selecting channel $k_2$ according to strategy $\mathbf{p}_{n_0}^{(2)}$. In what follows $\mathbf{1}_n^{(i)}(k), L^{(i)}(k)$ refer to $\mathbf{1}_n(k), L(k)$ with respect to strategy $\mathbf{p}_n^{(i)}$, for $i=1, 2$.
The difference in the payoff function $\Delta\psi_{n_0}$ is given by:
\begin{center}
$\bea{l}
\vspace{0.1cm}
\Delta\psi_{n_0}=\displaystyle \psi_{n_0}(\mathbf{p}_{n_0}^{(2)}, \mathbf{P}_{-n_0})-\displaystyle \psi_{n_0}(\mathbf{p}_{n_0}^{(1)}, \mathbf{P}_{-n_0}) \\ 
=\left[\tilde{u}_{n_0}(k_2)-L^{(2)}(k_2)\right] - \left[\tilde{u}_{n_0}(k_1)-L^{(1)}(k_1)\right]. \vspace{0.1cm}
\ena$
\end{center}
We apply the ordinal potential function that was introduced in \cite{Mavronicolas_Congestion_2007} to show that the difference in the proposed function (\ref{eq:ordinal_potential_function}) $\Delta\phi$ is given by:
\begin{center}
$\bea{l}
\vspace{0.1cm}
\Delta\phi=\displaystyle \phi(\mathbf{p}_{n_0}^{(2)}, \mathbf{P}_{-n_0})-\displaystyle \phi(\mathbf{p}_{n_0}^{(1)}, \mathbf{P}_{-n_0}) \\ \hspace{0.0cm}
\displaystyle=\sum_{n=1}^{N}{\sum_{k=1}^{K}{\tilde{p}_n}
{ \displaystyle\left( \tilde{u}_n(k)-\frac{L^{(2)}(k)+\tilde{p}_n}{2} \right)\mathbf{1}^{(2)}_n(k)}} \\ \hspace{1cm}
-\displaystyle\sum_{n=1}^{N}{\sum_{k=1}^{K}{\tilde{p}_n}}
{{ \displaystyle\left( \tilde{u}_n(k)-\frac{L^{(1)}(k)+\tilde{p}_n}{2} \right)\mathbf{1}^{(1)}_n(k)}} \\
=\displaystyle \tilde{p}_{n_0} \left(\left[\tilde{u}_{n_0}(k_2)-L^{(2)}(k_2)\right] \right. \vspace{0.1cm}\\ \hspace{3cm} \left.
- \displaystyle\left[\tilde{u}_{n_0}(k_1)-L^{(1)}(k_1)\right]\right)\vspace{0.1cm}\\ \hspace{1cm}
=\displaystyle\log\left(\frac{1}{1-P_{n_0}}\right)\Delta\psi_{n_0},
\ena$
\end{center}
where $\tilde{p}_n=\log\left(1/(1-P_n)\right)$. \\
Hence, (\ref{eq:ordinal_potential_def}) follows.
Furthermore, $\phi(\mathbf{P})$ is upper bounded by $\phi(\mathbf{P})<\sum_{n=1}^{N}{\max_{k}{\log\left(\frac{1}{1-P_n}\right)\log\left({u}_n(k)\right)}}$. \\
Due to the monotonicity of the logarithm increasing $\psi_n$ increases the actual rate $R_n$. As a result, $\phi(\mathbf{P})$ (\ref{eq:ordinal_potential_function}) is a bounded ordinal potential function of $\Gamma_{MCA}(K, P_1, ..., P_N)$ which completes the proof.
\end{proof}
%
\hspace{0cm}\vspace{0.1cm}
\begin{corollary}
The proposed sequential best-response algorithm converges to a NEP in finite time, starting from any point.
\end{corollary}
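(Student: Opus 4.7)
The plan is to combine the three ingredients already established in Theorem~\ref{th:ordinal}: the existence of an ordinal potential function $\phi$, its boundedness from above, and the finiteness of the effective strategy space under the single-channel restriction. First I would observe that under (\ref{eq:single_channel_strategy}) together with the best-response form (\ref{eq:3_2_pure_NEP}), each user's current strategy is fully parametrized by the chosen channel index in $\{1,\ldots,K\}$ (with magnitude $P_n$ fixed). Hence the set of multi-strategy matrices $\mathbf{P}\in\mathcal{P}$ that can arise during the dynamics is a subset of a finite set of cardinality at most $K^N$.

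Next I would establish strict monotonicity of $\phi$ along the trajectory. Consider the sequential best-response updates: at each step some user $n_0$ replaces $\mathbf{p}_{n_0}^{(1)}$ by a best response $\mathbf{p}_{n_0}^{(2)}$. If the update is nontrivial, i.e., the user actually switches channel, then by the definition of best response we have $R_{n_0}(\mathbf{p}_{n_0}^{(2)},\mathbf{P}_{-n_0})>R_{n_0}(\mathbf{p}_{n_0}^{(1)},\mathbf{P}_{-n_0})$; by the ordinal potential property (\ref{eq:ordinal_potential_def}) proved in Theorem~\ref{th:ordinal}, this is equivalent to $\phi(\mathbf{p}_{n_0}^{(2)},\mathbf{P}_{-n_0})-\phi(\mathbf{p}_{n_0}^{(1)},\mathbf{P}_{-n_0})>0$. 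Therefore the sequence $\phi(\mathbf{P}^{(t)})$ is strictly increasing whenever a user actually changes channel.

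I would then close the argument by combining the finite-state observation with the upper bound $\phi(\mathbf{P})<\sum_{n=1}^{N}\max_{k}\log\bigl(1/(1-P_n)\bigr)\log(u_n(k))$ noted at the end of the proof of Theorem~\ref{th:ordinal}. Since $\phi$ takes only finitely many distinct values on the reachable configurations and strictly increases at every nontrivial update, the number of channel switches is bounded (in fact by $K^N$), so the dynamics terminate in finite time. At termination, no user $n$ can strictly improve $R_n$ by unilaterally switching its single active channel; because every admissible deviation in $\tilde{\mathcal{P}}_n$ under the single-channel restriction reduces to choosing a different index $k$, this is exactly the NEP condition (\ref{eq:NEP}).

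The main obstacle is really just bookkeeping rather than a genuine mathematical difficulty: one must be careful to define ``a step'' as a nontrivial best-response update and to restrict attention to reachable single-channel configurations so that the strict-improvement chain is forced to stop on a finite set. Apart from this, the corollary follows from the standard finite-improvement property of potential games applied to the bounded ordinal potential $\phi$ constructed in Theorem~\ref{th:ordinal}.
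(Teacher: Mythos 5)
Your argument is correct and is exactly the route the paper intends: the corollary is stated without an explicit proof, relying on the standard finite-improvement property of ordinal potential games, which is precisely what you spell out (finite effective strategy space of at most $K^N$ channel assignments, strict increase of the bounded ordinal potential $\phi$ at every nontrivial best-response update, and identification of the terminal point with a NEP). Your closing observation that any admissible deviation in $\tilde{\mathcal{P}}_n$ reduces to a single-channel choice is the right justification (via linearity of $R_n$ in $\mathbf{p}_n$) and matches the paper's reduction of (\ref{eq:optimization_problem}) to a channel-selection problem.
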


\section{Competitive Approach Under the Totally Greedy (TG) Access Algorithm}
\label{sec:csi}
In this section we focus on the simple transmission scheme where users access the channel with the highest collision-free utility, without considering the channel utilization. The users have constraints on the transmission probability, as in the previous section. We refer to this scheme as the Totally Greedy (TG) access scheme. The disadvantage of this scheme is that users do not exploit the channel load information to increase their rate. For instance, consider the case of two channels $k_1$, $k_2$. Assume that an interferer exists on channel $k_2$; thus all users observe $u_n(k_1)>u_n(k_2)$. Using the TG scheme, all users transmit over channel $k_1$ even if the load on this channel is significantly higher than the load on channel $k_2$. This scheme may lead to inefficient exploitation of the spectrum band. On the other hand, it is simple to implement and only a single iteration is required. Furthermore, under some mild conditions on the utility matrix it provides a good solution as the number of users increases (as will be discussed in subsequent sections). Thus, it can serve as a benchmark of the performance that could be obtained by exploiting the channel utilization when implementing the best-response dynamics. In Section \ref{ssec:BC_Vs_BR} we examine the system performance in terms of user sum rate, when users exploit the channel utilization to improve their rates in a distributed fashion as compared to the TG scheme.

Let $\widetilde{U}$ be the actual utility matrix, which is obtained by removing the first column (i.e., the all-zero vector) from $U$, defined in (\ref{eq:utility_matrix}). For purposes of analysis in this section we assume some weak conditions on the utility matrix $U$:
\begin{description}
  \item[A(1)] The rows in the matrix $\widetilde{U}$ are statistically independent.
  \item[A(2)] The columns in the matrix $\widetilde{U}$ are identically distributed.
\end{description}
Due to path loss attenuation, the rows in the matrix $\widetilde{U}$ (which refer to users) are assumed to be independent but not-necessarily identically distributed.
Due to the frequency selective fading effect, the columns for each row in the matrix $\widetilde{U}$ (which refer to channels) are assumed to be identically distributed but not-necessarily independent.
It was shown in \cite{Cohen_Game_2013} that when assumptions $A(1), A(2)$ hold, the TG scheme provides an approximate solution to the best-response dynamics discussed in the previous section as $N$ increases.
The intuition for this result is that for a large number of users, the number of users that select channel $k$ approaches $N/K$. Hence, the load approaches a constant value and selecting the channel with the highest collision-free utility is more dominant. Furthermore, setting $P_n=K/N$ maximizes the network throughput since the expected number of users that select channel $k$ is $N/K$.

\subsection{Totally Greedy Vs. Best Response}
\label{ssec:BC_Vs_BR}
Here, we examine the loss of the simple TG scheme as compared to the best-response dynamics for a finite number of users in the case where every user experiences equal rates for all channels, i.e., $u_n=u_n(k)=u_n(k')$ for all $k, k'$. We consider the case where all users set $P_n=K/N$ to maximize the network throughput in terms of sum rate \cite{Cohen_Game_2013, Bai_Opportunistic_2006}. In this case the TG scheme randomly picks a channel.

Let $N(k)$ be the number of users that select channel $k$ and assume that $N/K \in \mathbb{Z}$. Then, the best-response dynamics converge when $N(k)=N/K$ for all $k$. The achievable rate of user $n$ is given by:
\beq
R_n^{BR}=\displaystyle u_n\frac{K}{N}\cdot \left(1-\frac{K}{N}\right)^{\frac{N}{K}-1}\;.
\eeq
Hence, the sum rate achieved by the best-response dynamics is given by:
\beq
S_R^{BR}=\displaystyle K\cdot \left(1-\frac{K}{N}\right)^{\frac{N}{K}-1}\frac{1}{N}\sum_{n=1}^{N}{u_n} .
\eeq
Next, we compute the expected user sum rate achieved by the TG scheme. Assume that user $n$ transmits over channel $k$. Note that channel $k$ is selected by all other users with a probability $1/K$ and then every user that picks channel $k$ actually transmits over it with a probability $K/N$. Therefore, the expected rate of user $n$ on channel $k$ is: $R_n(k)=u_n\frac{K}{N}\left(1-\frac{1}{K}\cdot\frac{K}{N} \right)^{N-1}$. Since every channel is selected with an equal probability $1/K$, the expected rate of user $n$ achieved by the TG scheme is given by:
\beq
R_n^{TG}=\displaystyle u_n\frac{K}{N}\cdot \left(1-\frac{1}{N}\right)^{N-1} \;.
\eeq
Hence, the expected sum rate achieved by the TG scheme is given by:
\beq
S_R^{TG}=\displaystyle K\cdot \left(1-\frac{1}{N}\right)^{N-1}\frac{1}{N}\sum_{n=1}^{N}{u_n}.
\eeq
Note that the sum rate achieved by both schemes approaches $Ke^{-1}\frac{1}{N}\sum_{n=1}^{N}{u_n}$ as $N$ increases. \\
The gain of the best response algorithm over the TG scheme is defined as the ratio between the sum rate achieved by the best-response dynamics and the sum rate achieved by the TG scheme. The gain is given by:
\beq
\displaystyle\rho=\frac{S_R^{BR}}{S_R^{TG}}= \frac{\left(1-\frac{K}{N}\right)^{\frac{N}{K}-1}}{\left(1-\frac{1}{N}\right)^{N-1}}.
\eeq
It can be shown that $\rho >1$ and that $\rho\rightarrow 1$ as $N/K$ increases. The intuition for this result is that as $N/K$ increases, the number of users that select channel $k$ approaches $N/K$. Hence, the load approaches a constant value and the TG selection is more dominant.
To illustrate the result, we depict $\rho$ in Fig. \ref{fig:rho}. It can be seen that the best-response algorithm outperforms the TG scheme by roughly $260\%$ when $N/K=1$ and by $20\%$ when $N/K=3$.
\begin{figure}[h]
\centering \epsfig{file=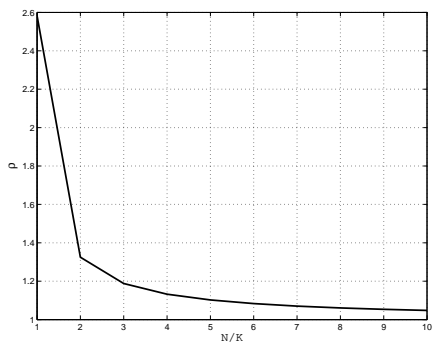,
width=0.35\textwidth}
\caption{The gain of the best-response dynamics over the TG scheme as a function of $N/K$.}
\label{fig:rho}
\end{figure}

\subsection{Determining $P_n$ for Heterogenous Networks}
\label{ssec:determining}
In this section we discuss the choice of $P_n$, $n=1, ..., N$. Assume that $P_n=\alpha_n/N$. A natural criterion for rate maximization in communication networks is to maximize the rate of a specific user (say user $1$) subject to the target rate constraints of all other users \cite{Cendrillon_Optimal_2006}. Note that as long as the demands for users $n=2, 3, ..., N$ are inside the rate region (i.e., feasible demands), maximizing the rate of user $1$ brings the system to operate on the boundary of the rate region, which is a desired operating point. %
We assume that $A(1), A(2)$ hold. Let $k_n^*=\arg\max_k u_n(k)$. Since we assume identically distributed channels, the probability that $k_n^*=k$ is $1/K$ for all $k$ and for all $n$, and the probability. Hence, the expected rate of user $n$ is given by:
\beq\label{eq:expected_rates}
\bea{l}
\displaystyle \E\left\{R_n\right\}=\E\left\{u_n(k_n^*)\right\}\frac{\alpha_n}{N} \prod_{i\neq n}{\left( 1-\frac{\alpha_i}{NK}\right)} \;.
\ena
\eeq
We consider the problem of maximizing the rate of a specific user $\E\left\{u_1(k_1^*)\right\}\frac{\alpha_1}{N} \prod_{i\neq 1}{\left( 1-\frac{\alpha_i}{NK}\right)}$ such that all other user rates satisfy the target rate demands, $\E\left\{u_n(k_n^*)\right\}\frac{\alpha_n}{N} \prod_{i\neq n}{\left( 1-\frac{\alpha_i}{NK}\right)}\geq R_n^T$ for all $n\neq 1$. Let $\bar{R}_n^T=R_n^T/\E\left\{u_n(k_n^*)\right\}$. Since $\E\left\{u_1(k_1^*)\right\}$ is a constant independent of $\alpha_1, ..., \alpha_N$, we need to solve the following optimization problem: \\
\beq
\label{eq:optimization__alpha1_2}
\bea{ll}
\displaystyle\arg\;\max_{\alpha_1, ..., \alpha_N} \hspace{0.5cm} & \displaystyle\frac{\alpha_1}{N} \prod_{i\neq 1}{\left( 1-\frac{\alpha_i}{NK}\right)} \hspace{0.5cm} \vspace{0.3cm}\\\hspace{0.5cm}
 \hspace{0.7cm}\mbox{s.t.} \hspace{0.5cm} & \displaystyle\frac{\alpha_n}{N} \prod_{i\neq n}{\left( 1-\frac{\alpha_i}{NK}\right)}\geq \bar{R}_n^T \hspace{0.5cm} \forall n\neq 1\vspace{0.3cm}
  \;.
\ena
\eeq
We optimize over $\alpha_1, ..., \alpha_N$ to maximize user $1$'s expected rate, such that target rate demands for all other users are satisfied. \\
The optimization problem (\ref{eq:optimization__alpha1_2}) is log-concave. Complexity does not depend on the number of channels $K$. Note
that reducing $\alpha_n$ increases all the other user rates $i\neq n$. Hence, the optimal solution lies on the boundary of the rate constraints.

\section{Cooperative Game Theoretic Learning}
\label{sec:game}
In previous sections we examined the dynamics of multi-channel ALOHA networks, when users try to maximize their rates under given transmission probability constraints. In this section we consider a different problem in multi-channel ALOHA networks, where the transmission probability constraints are not given. As a result, a self control on the transmission probability is mandatory to avoid high load on the channels and consequently a significant loss in data rate.

Unlike the homogenous users scenario, here we do not consider the sum rate as a performance measure of the network due to fairness considerations. Note that the optimal solution for the the sum rate maximization is when a single user with the highest collision-free utility on every channel transmits with probability $1$, while all the other users do not transmit. This operating point is clearly very bad from a fairness perspective. Therefore, in this section the sum log rate is considered to be a performance measure of the network, which is a common measure to evaluate the tradeoff between efficiency and fairness among users \cite{Yu_2011_Multicell, Gai_2011_Packet}. We show that our approach plays an important role in achieving NBS among users \cite{Nash_1950_Bargaining, Owen_Game_1995}.

First, in Section \ref{ssec:motivation} we motivate our approach by analyzing the performance among users that transmit over the same channel. Roughly speaking, we show that $b(k)\rightarrow e^{-1}$ is essential to achieve both efficiency and fairness among users that transmit over channel $k$. Based on this observation, we formulate the distributed rate maximization for a multi-channel network in Section \ref{ssec:optimization_adaptive}. In Section \ref{ssec:performance} we analyze the performance for the entire network. We show that when assumptions $A(1), A(2)$ hold, our approach achieves the target solution among all the users in the multi-channel network (and not just for each channel separately).

\subsection{Rationale}
\label{ssec:motivation}

Let $\mathcal{N}_k\;, \; N(k)=|\mathcal{N}_k|$ be the set of users that transmit over channel $k$ and its cardinality, respectively. In this section, we show that $b(k)\rightarrow e^{-1}$ is essential to achieving both efficiency and fairness among users in $\mathcal{N}_k$.

\subsubsection{Fairness in Channel Sharing}
\label{sssec:fairness}

\begin{definition}
A transmission scheme is called an \emph{equal share} transmission scheme over channel $k$ if $P_n=p_k$ for some $0\leq p_k\leq 1$ for all $n\in\mathcal{N}_k$. \vspace{0.05cm}
\end{definition}
Applying the equal share transmission scheme is reasonable from a fairness perspective, where users that transmit over the same channel are required to equally share the expected number of successful time slots. Thus, in Proposition \ref{prop:b} we consider the case where users that transmit over channel $k$ are restricted to using the equal share transmission scheme. It is shown that $b(k)\rightarrow e^{-1}$ is a necessary condition to maximize the user rates under this setting as the number of users increases.
\begin{proposition}
\label{prop:b}
Assume that the equal share transmission scheme over channel $k$ is implemented. Then, setting $P_n=1/N(k)$ for all $n\in \mathcal{N}_k$ maximizes the user rate $R_n$ for all $n\in\mathcal{N}_k$. \vspace{0.1cm}
\end{proposition}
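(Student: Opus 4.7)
The plan is to reduce the multi-user, multi-channel rate expression to a one-dimensional optimization in the common transmission probability $p_k$, and then maximize it by elementary calculus.

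First I would specialize the rate formula (\ref{eq:R}) to the equal-share setting. Since every user in $\mathcal{N}_k$ uses the single-channel strategy (\ref{eq:single_channel_strategy}) with $x_n = p_k$ on channel $k$, and users outside $\mathcal{N}_k$ do not transmit on channel $k$, the interference probability $v_n(k) = \prod_{i \neq n}(1-p_i(k))$ collapses to $(1-p_k)^{N(k)-1}$ for every $n \in \mathcal{N}_k$. Combining this with $p_n(k) = p_k$ and (\ref{eq:R}) gives the single term
\beq
R_n = u_n(k)\, p_k\, (1-p_k)^{N(k)-1},
\eeq
so the rate of each user in $\mathcal{N}_k$ is just a nonnegative constant $u_n(k)$ times a common function $f(p_k) = p_k(1-p_k)^{N(k)-1}$ of the shared probability.

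Next I would maximize $f(p_k)$ over $p_k \in [0,1]$. Differentiating,
\beq
f'(p_k) = (1-p_k)^{N(k)-2}\bigl[1 - N(k)\, p_k\bigr],
\eeq
which is positive for $p_k < 1/N(k)$ and negative for $p_k > 1/N(k)$, with $f(0) = f(1) = 0$. Therefore $p_k = 1/N(k)$ is the unique interior maximizer, and since $u_n(k) \geq 0$ is independent of $p_k$, this same choice simultaneously maximizes $R_n$ for every $n \in \mathcal{N}_k$. The key observation making the result clean is that under the equal-share restriction the per-user heterogeneity $u_n(k)$ factors out, so no conflict of interest arises among the users on channel $k$.

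There is essentially no obstacle here beyond the one-variable calculus; the only subtlety worth flagging is verifying that the optimum is interior (hence not on the boundary of $[0,1]$), which follows immediately since $f(0) = f(1) = 0$ while $f(1/N(k)) > 0$. As a consistency check with the subsequent motivation for $b(k) \to e^{-1}$, one may note that at the optimum $b(k) = (1-1/N(k))^{N(k)} \to e^{-1}$ as $N(k)$ grows, matching the stated rationale for the proposition.
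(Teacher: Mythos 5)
Your proof is correct and is exactly the standard single-channel ALOHA throughput maximization that the paper invokes without writing out (it simply cites Bertsekas--Gallager for this fact), so you have just made explicit the argument the authors appeal to. The reduction to $R_n = u_n(k)\,p_k(1-p_k)^{N(k)-1}$ and the one-variable calculus are exactly right; the only trivial caveat is the degenerate case $N(k)=1$, where the derivative formula as written is ill-posed but the conclusion $p_k=1$ still holds by monotonicity.
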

Proposition \ref{prop:b} follows from standard results on a single-channel ALOHA network \cite{Bertsekas_1992_Data}. \vspace{0.1cm}
\begin{corollary}
\label{cor:b1}
Maximizing the user rate $R_n$ for all $n\in\mathcal{N}_k$ under the equal share transmission scheme implies $b(k)\rightarrow e^{-1}$ as $N(k)\rightarrow\infty$. \vspace{0.1cm}
\end{corollary}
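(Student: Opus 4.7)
The plan is to derive the corollary as a near-immediate consequence of Proposition \ref{prop:b}, so it is mostly a matter of substituting the optimal transmission probability into the definition of $b(k)$ and taking a limit.

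First I would invoke Proposition \ref{prop:b}: under the equal share transmission scheme over channel $k$, the rate $R_n$ is maximized for every $n \in \mathcal{N}_k$ precisely when $P_n = 1/N(k)$ for all $n \in \mathcal{N}_k$. Next I would specialize the definition of $b(k)$ in (\ref{eq:b}) to this situation. Since users not in $\mathcal{N}_k$ do not transmit on channel $k$, they contribute factors equal to $1$, and the product collapses to
\beq
b(k) \;=\; \prod_{n \in \mathcal{N}_k} \left(1 - \tfrac{1}{N(k)}\right) \;=\; \left(1 - \tfrac{1}{N(k)}\right)^{N(k)}.
\eeq

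The final step is the standard real-analysis limit $(1 - 1/m)^m \to e^{-1}$ as $m \to \infty$, applied with $m = N(k)$. This yields $b(k) \to e^{-1}$ as $N(k) \to \infty$, which is exactly the assertion of the corollary.

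Because the argument reduces to a direct substitution followed by an elementary limit, there is no real obstacle here; the only care needed is to make explicit that factors from users outside $\mathcal{N}_k$ drop out of the product in (\ref{eq:b}), so the corollary truly follows from Proposition \ref{prop:b} without any additional assumption on the remainder of the network.
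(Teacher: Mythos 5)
Your proof is correct and follows essentially the same route as the paper: substitute the maximizing probabilities $P_n = 1/N(k)$ from Proposition \ref{prop:b} into the definition of $b(k)$ and apply the standard limit $(1-1/m)^m \to e^{-1}$. Your extra remark that users outside $\mathcal{N}_k$ contribute unit factors to the product in (\ref{eq:b}) is a small clarification the paper leaves implicit, but it does not change the argument.
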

%
\begin{proof}
Setting $P_n=1/N(k)$ for all $n\in\mathcal{N}_k$ yields:
\beq
\label{eq:b_asymp}
\displaystyle b(k)=\left(1-\frac{1}{N(k)}\right)^{N(k)}\rightarrow e^{-1} \;\;\mbox{as} \;\; N(k)\rightarrow\infty
\eeq
\end{proof}

\subsubsection{The Efficiency and Fairness Tradeoff}
\label{sssec:balancing}
\hspace{0cm}\\
Next, to further strengthen the rationale, we examine the case when the transmission probability may be different for every user and users may transmit with a probability close to $1$. Note that the sum rate is maximized by setting $P_{n'}=1$ for $n'=\arg\;\max_{n\in\mathcal{N}_k}\left(u_n(k)\right)$, and $P_n=0$ for all $n\neq n'$, which obviously does not maintain fairness. On the other hand, Theorem \ref{th:b2} shows that the equal share transmission scheme still maximizes the user sum log rate over channel $k$ (i.e., the tradeoff between efficiency and fairness among users that share channel $k$ is good).  \vspace{0.1cm}
%
\begin{theorem}
\label{th:b2}
The unique solution that maximizes the sum log rate over channel $k$, $\arg\;\max_{\left\{P_n\right\}_{n\in\mathcal{N}_k}}\;\sum_{n\in\mathcal{N}_k}\log(R_n)$ is given by $P_n^*=1/N(k)$ for all $n\in\mathcal{N}_k$. \vspace{0.1cm}
\end{theorem}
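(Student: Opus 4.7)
The plan is to write out the objective explicitly, observe that it separates across users after summing, and then maximize a one-dimensional strictly concave function per user.

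First I would substitute the definition of $R_n$ restricted to channel $k$. Since the users in $\mathcal{N}_k$ all transmit on channel $k$ (and single-channel strategies are assumed), the expected rate reads
\beq
R_n = u_n(k)\, P_n \prod_{i\in\mathcal{N}_k,\, i\neq n}(1-P_i),
\eeq
so that
\beq
\log R_n = \log u_n(k) + \log P_n + \sum_{i\in\mathcal{N}_k,\, i\neq n}\log(1-P_i).
\eeq
Since $\log u_n(k)$ does not depend on the $P_n$'s, it can be dropped. Summing over $n\in\mathcal{N}_k$, the double sum collects each $\log(1-P_i)$ exactly $N(k)-1$ times (once for every index $n\neq i$), which gives
\beq
\sum_{n\in\mathcal{N}_k}\log R_n \;=\; \sum_{n\in\mathcal{N}_k}\log u_n(k) + \sum_{n\in\mathcal{N}_k}\Bigl[\log P_n + (N(k)-1)\log(1-P_n)\Bigr].
\eeq
This is the key step: the objective becomes fully separable in the decision variables $P_n$.

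Next I would maximize, for each $n\in\mathcal{N}_k$ independently, the one-dimensional function
\beq
f(P_n) \;=\; \log P_n + (N(k)-1)\log(1-P_n), \qquad P_n\in(0,1).
\eeq
Both $\log P_n$ and $\log(1-P_n)$ are strictly concave on $(0,1)$, so $f$ is strictly concave and attains a unique interior maximum. Setting $f'(P_n)=1/P_n - (N(k)-1)/(1-P_n) = 0$ yields immediately $P_n^*=1/N(k)$, with $f(P_n)\to-\infty$ at the endpoints confirming that this interior stationary point is the unique global maximum.

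Combining the per-user maximizers gives $P_n^* = 1/N(k)$ for every $n\in\mathcal{N}_k$, and strict concavity of each term guarantees uniqueness of the joint maximizer. I do not anticipate a serious obstacle: the only thing to be slightly careful about is the bookkeeping for the double-sum step, and verifying that the per-user objectives decouple properly so that the separable optimum is also the joint optimum, but both are immediate once the objective is written out.
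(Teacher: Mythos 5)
Your proof is correct and follows essentially the same route as the paper: both expand $\log R_n$, regroup the double sum so that each $\log(1-P_i)$ appears $N(k)-1$ times, and then use strict concavity plus the first-order condition to obtain $P_n^*=1/N(k)$ (the paper phrases this as joint concavity of $S_k$ rather than per-user separability, but the computation is identical). The only detail to patch is $N(k)=1$, where your claim that $f\to-\infty$ at both endpoints fails ($f(P_n)=\log P_n$ is increasing and the maximizer sits at the boundary $P_n=1=1/N(k)$); the paper treats this case separately for exactly this reason.
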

%
\begin{proof}
The achievable rate of user $n$ is given by:
\beq
R_n=u_n(k)P_n\prod_{i\in\mathcal{N}_k \;,\; i\neq n}\left(1-P_i\right)\;, \forall n\in\mathcal{N}_k \;.
\eeq
Taking log on both sides yields:
\beq
\bea{l}
\displaystyle\log(R_n)=\log(u_n(k))+\log(P_n)+\sum_{i\in\mathcal{N}_k \;,\; i\neq n}\log\left(1-P_i\right) \vspace{0.1cm}\\\hspace{5cm}
\;, \forall n\in\mathcal{N}_k \;.
\ena
\eeq
Let $S_k\triangleq\sum_{n\in\mathcal{N}_k}\log(R_n)$ be the sum log rate on channel $k$. Hence, for $N(k)\geq 2$ we obtain:
\beq
\label{eq:S_k}
\bea{l}
\displaystyle S_k=\sum_{n\in\mathcal{N}_k}\left[\log(u_n(k))+\log(P_n) \vspace{0.1cm} \right.\\\left. \hspace{3cm}
                \displaystyle+\sum_{i\in\mathcal{N}_k \;,\; i\neq n}\log\left(1-P_i\right)\right] \vspace{0.1cm}\\
\displaystyle =\sum_{n\in\mathcal{N}_k}\left[\log(u_n(k))\right] \vspace{0.1cm} \\ \hspace{0.5cm}
\displaystyle+\sum_{n\in\mathcal{N}_k}\left[\log(P_n)+\sum_{i\in\mathcal{N}_k}\log\left(1-P_i\right)
                                                                -\log\left(1-P_n\right)\right] \vspace{0.1cm}\\
\displaystyle =\sum_{n\in\mathcal{N}_k}\left[\log(u_n(k))\right] \vspace{0.1cm} \\ \hspace{0.5cm}
\displaystyle +\sum_{n\in\mathcal{N}_k}\log(P_n)+\left(N(k)-1\right)\sum_{n\in\mathcal{N}_k}\log\left(1-P_n\right)
\;,
\ena
\eeq
and for $N(k)=1$ we have:
\begin{center}
$\displaystyle S_k=\sum_{n\in\mathcal{N}_k}\left[\log(u_n(k))+\log(P_n)\right]$\;.
\end{center}
By the monotonicity of the logarithm, it is clear that for $N(k)=1$, maximizing $S_k$ yields $P_n^*=1/N(k)=1$ for all $n\in\mathcal{N}_k$. Next, we focus on the case where $N(k)\geq 2$. Note that $S_k$ is a strictly concave function of $P_n, n\in\mathcal{N}_k$. Therefore, it has a unique global maximum. Differentiating $S_k$ with respect to $P_n\;, \; n\in\mathcal{N}_k$, and equating to zero yields a unique solution $P_n^*=1/N(k)$ for all $n\in\mathcal{N}_k$.
\end{proof}

As a result, we obtain the following corollary, as was done in (\ref{eq:b_asymp}).\vspace{0.1cm}
\begin{corollary}
\label{cor:b2}
Maximizing the sum log rate on channel $k$ implies $b(k)\rightarrow e^{-1}$ as $N(k)\rightarrow\infty$. \vspace{0.1cm}
\end{corollary}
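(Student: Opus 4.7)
My plan is to follow the same one-line argument used for Corollary \ref{cor:b1}, but with Theorem \ref{th:b2} replacing Proposition \ref{prop:b} as the source of the optimal transmission probabilities. Since Theorem \ref{th:b2} has just been established, the hypothesis ``maximizing the sum log rate on channel $k$'' immediately pins down the unique optimal strategy profile for the users transmitting on channel $k$, namely $P_n^* = 1/N(k)$ for all $n \in \mathcal{N}_k$. There is essentially nothing new to prove; the content is a direct substitution and a standard limit.

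Concretely, I would proceed as follows. First, invoke Theorem \ref{th:b2} to replace each $P_n$ appearing in the product defining $b(k)$ (see equation (\ref{eq:b})) by $1/N(k)$ for the $N(k)$ users in $\mathcal{N}_k$. Second, observe that the remaining $N - N(k)$ users, by definition of $\mathcal{N}_k$, do not transmit on channel $k$, so their factors $(1 - p_i(k)) = 1$ drop out of the product. This reduces $b(k)$ to $(1 - 1/N(k))^{N(k)}$, exactly as in equation (\ref{eq:b_asymp}). Third, apply the classical limit $(1 - 1/m)^m \to e^{-1}$ as $m \to \infty$ with $m = N(k)$ to conclude.

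There is no real obstacle here: the potential subtlety is simply making sure to separate out the users in $\mathcal{N}_k$ from those outside when evaluating the full product in (\ref{eq:b}), but this is automatic from the definitions. The proof can therefore be compressed to a single displayed equation mirroring (\ref{eq:b_asymp}), together with a pointer to Theorem \ref{th:b2} for the choice $P_n^* = 1/N(k)$.
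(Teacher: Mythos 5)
Your proof is correct and matches the paper's own argument: the paper likewise derives Corollary \ref{cor:b2} by invoking the unique maximizer $P_n^*=1/N(k)$ from Theorem \ref{th:b2} and then repeating the computation of (\ref{eq:b_asymp}), i.e., $b(k)=(1-1/N(k))^{N(k)}\rightarrow e^{-1}$. No further comment is needed.
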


\subsubsection{Bargaining Over the Collision Channel}
\label{sssec:bargaining}

\hspace{0cm}\\
Here, we provide an interpretation of our approach from a cooperative game theory perspective. In a non-cooperative game, players (i.e., users) individually attempt to maximize their own utility regardless of the utility achieved by other players. On the other hand, in a cooperative game, players bargain with each other. If an agreement is reached, they act according to the agreement. If they disagree, they do not cooperate. For more details on cooperative game theory and applications to network games, the reader is referred to \cite{Nash_1950_Bargaining, Owen_Game_1995, Han_2005_Fair, Leshem_2006_Brgaining, Boche_2007_Non, Nokleby_2007_Cooperative, Jorswieck_2008_Miso, Gao_2008_Game, Leshem_2008_Cooperative, Leshem_2009_Game}.

Let $\mathcal{N'}$ be the set of $N'=|\mathcal{N}'|$ players. The underlying structure for Nash bargaining in an $N'$ players scenario is a set of outcomes of the bargaining process $R_c\in\mathbb{R}^{N'}$ (which in our model represents the set of achievable rates that the users can get by cooperating) and a designated disagreement outcome $\mathbf{d}=(R_1^{min}, ..., R_{N'}^{min})$ (where in our model $R_n^{min}$ represents the minimal rate that user $n$ would expect to achieve. Otherwise, it will not cooperate). Cooperative game theories prove that there exists a unique and efficient solution under intuitive axioms of fairness, symmetry and scaling-invariant and this solution is given by \cite{Owen_Game_1995}:
\beq
\label{eq:NBS}
R^*=\arg\;\max_{R\in R_c\cup \left\{\mathbf{d}\right\}}\prod_{n\in\mathcal{N'}}\left(R_n-R_n^{min}\right) \;,
\eeq
dubbed the Nash Bargaining Solution (NBS) among players in $\mathcal{N'}$.

Next, we show that maximizing the sum log rate (i.e., applying the equal share transmission scheme) over channel $k$ is also an NBS among users in $\mathcal{N}_k$. \vspace{0.1cm}
\begin{theorem}
\label{th:SL_SN}
Let $\mathcal{N'}=\mathcal{N}_k$ in (\ref{eq:NBS}). Setting $P_n=1/N(k)$ for all $n\in\mathcal{N}_k$ achieves the NBS among users in $\mathcal{N}_k$.
\end{theorem}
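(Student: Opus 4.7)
The plan is to reduce the Nash bargaining objective to the sum-log-rate maximization that has already been solved in Theorem \ref{th:b2}. First, I would identify the natural disagreement outcome $\mathbf{d}$ for users sharing channel $k$. If the users in $\mathcal{N}_k$ fail to agree on a cooperative transmission strategy, the worst-case outcome corresponds to every user attempting to transmit greedily, so that collisions occur with probability one and each user obtains a vanishing rate. Accordingly I would set $R_n^{min}=0$ for all $n\in\mathcal{N}_k$, which is the standard disagreement point for collision-channel bargaining models and was also implicitly used when formulating (\ref{eq:NBS}).

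Second, with $R_n^{min}=0$ the NBS objective in (\ref{eq:NBS}) reduces to $\prod_{n\in\mathcal{N}_k} R_n$. Whenever $0<P_n<1$ for all $n\in\mathcal{N}_k$, each $R_n=u_n(k)P_n\prod_{i\neq n}(1-P_i)>0$, so taking logarithms is well defined. By the strict monotonicity of the logarithm, maximizing $\prod_{n\in\mathcal{N}_k} R_n$ over the feasible set is equivalent to maximizing
\begin{equation*}
\sum_{n\in\mathcal{N}_k}\log(R_n)=S_k,
\end{equation*}
and a candidate that maximizes $S_k$ also maximizes the product (and vice versa). Boundary cases in which some $P_n\in\{0,1\}$ yield $\prod_n R_n=0$ and are therefore strictly dominated, so they can be discarded.

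Third, I would invoke Theorem \ref{th:b2}, which establishes that the unique maximizer of $S_k$ over $\{P_n\}_{n\in\mathcal{N}_k}$ is $P_n^{*}=1/N(k)$ for all $n\in\mathcal{N}_k$. Combining this with the equivalence in the previous step, the equal-share strategy $P_n=1/N(k)$ is the unique maximizer of $\prod_{n\in\mathcal{N}_k}(R_n-R_n^{min})$ and hence coincides with the NBS $R^{*}$ defined in (\ref{eq:NBS}).

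The main obstacle I anticipate is justifying the choice of disagreement outcome: the NBS is only meaningfully characterized by (\ref{eq:NBS}) once $\mathbf{d}$ is specified, and a different $\mathbf{d}$ (e.g., allowing some users a nonzero fallback rate on another channel) would shift the maximizer. In the present setting, however, the bargaining is restricted to the single channel $k$ and the only natural non-cooperative fallback is the full-collision regime, so $R_n^{min}=0$ is the canonical choice. Beyond that, the technical work is already done by Theorem \ref{th:b2}, and the remainder of the argument is essentially the observation that $\log$ preserves the arg-max.
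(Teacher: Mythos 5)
Your proposal is correct and follows essentially the same route as the paper: identify the disagreement point as $R_n^{min}=0$ (since non-cooperating users would all push their transmission probabilities to $1$ and collide), observe that the Nash product then reduces to the sum log rate $S_k$, and invoke Theorem~\ref{th:b2} for the unique maximizer $P_n^*=1/N(k)$. Your additional care in handling the boundary cases $P_n\in\{0,1\}$ and in justifying the log transformation is a minor refinement of the paper's argument, not a different approach.
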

\begin{proof}
Note that by non-cooperating all the users in $\mathcal{N}_k$ will increase their transmission probabilities to $1$ to increase their rates. Thus, every user in $\mathcal{N}_k$ (say $n$) expects to obtain $R_n^{min}=0$ by non-cooperating. Thus, substituting $R_n^{min}=0$ for all $n\in\mathcal{N}_k$ in (\ref{eq:NBS}) yields the sum log rate maximization. The rest of the proof follows from the proof of Theorem \ref{th:b2}.
\end{proof}
\vspace{0.1cm}
\begin{corollary}
\label{cor:b3}
Applying the NBS among users that share channel $k$ implies $b(k)\rightarrow e^{-1}$ as $N(k)\rightarrow\infty$. \vspace{0.1cm}
\end{corollary}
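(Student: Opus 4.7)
The plan is to chain together Theorem~\ref{th:SL_SN} (which identifies the NBS with the equal-share policy $P_n = 1/N(k)$) with the same limit computation used in Corollary~\ref{cor:b1}, equation~(\ref{eq:b_asymp}). Essentially, Corollary~\ref{cor:b3} is an immediate consequence once one recognizes that the NBS on channel $k$ prescribes exactly the same transmission probabilities as the equal-share scheme analyzed earlier.

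More concretely, I would proceed as follows. First, invoke Theorem~\ref{th:SL_SN} to conclude that the NBS among the users in $\mathcal{N}_k$ is achieved by setting $P_n = 1/N(k)$ for every $n \in \mathcal{N}_k$. Second, substitute these probabilities into the definition $b(k) \triangleq \prod_{i=1}^{N}(1 - p_i(k))$. Since users outside $\mathcal{N}_k$ contribute factors of $1$ (they do not transmit on channel $k$), only the $N(k)$ factors corresponding to users in $\mathcal{N}_k$ remain, giving
\begin{equation*}
b(k) = \left(1 - \frac{1}{N(k)}\right)^{N(k)}.
\end{equation*}
Third, take the limit $N(k) \to \infty$ and use the standard identity $(1 - 1/m)^m \to e^{-1}$ to obtain $b(k) \to e^{-1}$.

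There is essentially no obstacle here: the work has already been done in Theorem~\ref{th:b2} and Theorem~\ref{th:SL_SN}, and the final limit is textbook. The only thing worth being careful about is that $b(k)$ is defined as a product over \emph{all} $N$ users rather than just those in $\mathcal{N}_k$, so one must explicitly note that the factors corresponding to users $i \notin \mathcal{N}_k$ are equal to $1$ and therefore drop out of the product. Once this is observed, the corollary follows by exactly the same one-line computation that established Corollary~\ref{cor:b1}.
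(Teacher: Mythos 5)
Your proposal is correct and matches the paper's own (implicit) argument: the paper derives Corollary~\ref{cor:b3} by combining Theorem~\ref{th:SL_SN} (NBS $\Rightarrow$ $P_n^\ast = 1/N(k)$) with the same limit computation already used for Corollary~\ref{cor:b1} in equation~(\ref{eq:b_asymp}). Your extra remark that users outside $\mathcal{N}_k$ contribute unit factors to the product defining $b(k)$ is a reasonable clarification but does not change the route.
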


In Section \ref{ssec:performance} we show that when assumptions $A(1), A(2)$ holds, our approach achieves the global NBS of the network.

\subsection{The Optimization Problem}
\label{ssec:optimization_adaptive}

In this section we formulate the distributed rate maximization for a multi-channel network aimed to achieve both efficiency and fairness on every channel. In subsequent sections we examine two schemes used to solve the proposed optimization problem in a distributed fashion. Moreover, in Section \ref{ssec:performance} we show that when $A(1), A(2)$ hold, not just the sum log rate on every channel is maximized, but also the global sum log rate of the network $\sum_{n=1}^{N}\log(R_n)$ is maximized as $N$ increases (which is also the global NBS of the network as shown in Theorem \ref{th:global_NBS}).

Based on the observation that for a large number of users $b(k)$ should approach $e^{-1}$, the goal in this section is to cause the system to operate with the desired load on each channel in a distributed fashion. Let $b_n(k)$ be the estimate of $b(k)$ at user $n$ by monitoring the channel utilizations. Hence, each user is required to maximize its rate, but maintain a desired load on the channels (which is affected by $b_n(k)$):
\beq
\label{eq:optimization_uncertainty}
\bea{llll}
\displaystyle\max_{k\in\left\{1, ..., K\right\}} \hspace{0.5cm} & R_n \hspace{0.5cm} & \mbox{s.t.} \hspace{0.5cm} & b_n(k)=e^{-1}
  \;.
\ena
\eeq
We refer to this formulation as the \emph{adaptive rate maximization} problem, since the transmission probabilities are adapted to the channel loads. \\
Note that solving this problem may lead to undesirable solutions depending on the dynamic updating of the transmission probabilities across users (note that $b(k)\rightarrow e^{-1}$ is a necessary but not a sufficient condition to maximize the sum log rate). For instance, assume that user $n$ monitors $v_n(k)$ and wants to force its transmission probability to satisfy the constraint: $b_n(k)=(1-P_n)\cdot v_n(k)=e^{-1}$. In this case, the update of $P_n$ yields
\beq
\displaystyle P_n=\max{\left\{1-\frac{e^{-1}}{v_n(k)}\;,\;0\right\}} \;.
\eeq
As a result, if user $n$ detects channel $k$ as a free channel, i.e., $v_n(k)=1$, it maximizes its probability to get $P_n=1-e^{-1}$ which satisfies the constraint. Then, in the next iteration, any other user that accesses this channel will detect $v_n(k)=e^{-1}$ and will force its probability to zero to satisfy the constraint (as a result, $\sum_{n}\log(R_n)\rightarrow-\infty$). Hence, in the next section we propose two schemes to obtain the target solutions for all users.

\section{Distributed Algorithms for the Adaptive Rate Maximization Problem}
\label{sec:distributed}

In this section we propose parallel and sequential mechanisms to solve (\ref{eq:optimization_uncertainty}) efficiently. The proposed mechanisms are executed from time to time until convergence. It should be noted that the proposed algorithms apply for all $N\geq 1$ and perform well as can be seen via simulation results. Performance analysis, however, will be presented under the asymptotic regime (i.e., as $N$ approaches infinity) and an accurate estimate of $b_n(k)$.

\subsection{Sequential Updating}
\label{ssec:sequential}
In the sequential updating mechanism, users adjust their transmission probability until they get the desired channel load. Let $\delta_n(k)\triangleq|b_n(k)-e^{-1}|$. The users' goal is to reduce $\delta_n(k)$ sequentially until convergence. \\
In the initialization step, all users select the channels with the highest collision-free utility and set their transmission probability to $P_n=p^{(0)}_n(k^{*})=p_0 << 1$. \\
Next, in the learning step, each user occasionally monitors the channel utilization $v_n(k)$ of all channels. After the user has estimated $v_n(k)$ it does the following. First, it computes the highest transmission probability allowed on each channel based on the estimated load:
\beq
\displaystyle \tilde{p}_n(k)=\max{\left\{1-\frac{e^{-1}}{v_n(k)}\;,\;0\right\}} 
\;.
\eeq
This operation will encourage users to move to channels with low loads. \\
Next, the user computes the potential achievable rates on all the channels:
\beq
\tilde{R}_n(k)=\tilde{p}_n(k) u_n(k)v_n(k) \;.
\eeq
If there is a channel with a higher potential rate than its current channel, the user switches to this channel; i.e., it updates $k^{*}$ as follows:
\beq
\displaystyle k^{*}=\arg\; \max_{k}{\left\{\tilde{R}_n(k)\right\}} \;.
\eeq
Next, the user reduces $\delta_n(k^{*})$ to obtain the desired load. If $b_n(k^{*})=(1-P_n)\cdot v_n(k^{*})>e^{-1}$, user $n$ increases its transmission probability to increase the load on the channel: $P^{(\ell)}_n=P^{(\ell-1)}_n+\epsilon$. Otherwise, it reduces its transmission probability to reduce the load on the channel: $P^{(\ell)}_n=P^{(\ell-1)}_n-\epsilon$. \\
Note that as $b_n(k)$ approaches $e^{-1}$ for all $k$, the potential transmission probability $\tilde{p}_n(k)$ that user $n$ computes for all other channels $k\neq k^{*}$ approaches zero to maintain the desired load. Hence, users are encouraged to remain in their channels as the load approaches the desired load.  \\
To stabilize the algorithm, we allow user $n$ to switch to channel $k_2$ from $k_1$ only if it gains at least $\delta_{R}(n)$ percents of its current rate: $R_n(k_2)\geq R_n(k_1)\left(1+\delta_R(n)\right)$. Users may update $\delta_R(n)$ dynamically to speed up convergence (i.e., by increasing $\delta_R(n)$) or to increase their data rate (i.e., by reducing $\delta_R(n)$) from time to time\footnote{Practically, simulation results show convergence of the sequential updating algorithm for very small values of $\delta_R(n)$.}.
The algorithm stops when $b_n(k)\approx e^{-1}$ for all $k$. The sequential updating mechanism is given in Table \ref{tab:seq_algorithm}. For $\delta_R(n)=0$ users play their best response, while for $\delta_R(n)\rightarrow\infty$ users select the channel with the highest collision-free utility.
 \vspace{0.1cm}  \\

\begin{remark}
Note that setting $\delta_R(n)\rightarrow \infty$ leads to the simple TG scheme discussed in Section \ref{sec:csi}. If A(1), A(2) hold, the TG scheme performs well for a large number of users and $\delta_R(n)\rightarrow \infty$ is a good choice.
On the other hand, setting $\delta_R(n)\rightarrow \infty$ may lead to undesirable solutions in non-i.i.d utility matrix scenarios. For instance, consider the case of $K=2$ channels, where all users detect $u_n(1)>u_n(2)$ for all $n$. This case is commonplace in communication networks when there is significant interference on channel $k=2$. In this case, by using the TG scheme, all users transmit over channel $k$, which may cause a very high load on this channel.
\end{remark}
\begin{table}
\caption{Sequential updating algorithm}
\centering
\normalsize\begin{tabular}{|l|}
\hline
\\
\hspace{0.3cm} - \hspace{0.3cm} Initialize:\\
\hspace{0.3cm} - \hspace{0.3cm} for $n=1, ..., N$ users do:\\
\hspace{0.3cm} - \hspace{1cm} estimate $u_n(k)$ for all $k=1, ..., K$ \hspace{0.5cm} \\
\hspace{0.3cm} - \hspace{1cm} $k^* \leftarrow \arg \; \displaystyle\max_k \; \left\{ u_n(k) \right\}$ \\
\hspace{0.3cm} - \hspace{1cm} $P_n  \leftarrow p_0$ \\
\hspace{0.3cm} - \hspace{1cm} $p_n(k^*) \leftarrow P_n$ \\
\hspace{0.3cm} - \hspace{1cm} $p_n(0) \leftarrow 1-P_n$ \\
\hspace{0.3cm} - \hspace{0.3cm} end for\\\\
\hspace{0.3cm} - \hspace{0.3cm} repeat:\hspace{0.5cm} \\
\hspace{0.3cm} - \hspace{0.3cm} for $n=1, ..., N$ users do:\\
\hspace{0.3cm} - \hspace{1cm} estimate $v_n(k)$ for all $k=1, ..., K$ \\
\hspace{0.3cm} - \hspace{1cm} compute $\tilde{p}_n(k)=\max{\left\{1-\frac{e^{-1}}{v_n(k)}\;,\;0\right\}}$ \\
                                        \hspace{4cm} for all $k=1, ..., K$ \hspace{0.3cm} \\
\hspace{0.3cm} - \hspace{1cm} compute potential rates:  \\
                         \hspace{1.7cm} $\tilde{R}_n(k)=\tilde{p}_n(k)u_n(k)v_n(k)$ 
                         \hspace{0.0cm}  \\
\hspace{0.3cm} - \hspace{1cm} if $\displaystyle\max_k \; \left\{ \tilde{R}_n(k) \right\}>\tilde{R}_n(k^{*})\left(1+\delta_R(n)\right)$ do: \\
                                    \hspace{4cm} $k^* \leftarrow \arg \; \displaystyle\max_k \; \left\{ \tilde{R}_n(k) \right\}$ \hspace{0.3cm} \\
\hspace{0.3cm} - \hspace{1cm} end if  \\
\hspace{0.3cm} - \hspace{1cm} compute $b_n(k^{*})=(1-P_n)\cdot v_n(k^{*})$ \\
\hspace{0.3cm} - \hspace{1cm} if $b_n(k^{*})>e^{-1}$ do: \\
                                        \hspace{4cm} $P_n\leftarrow P_n+\epsilon$ \hspace{0.3cm} \\
\hspace{0.3cm} - \hspace{1cm} else, do:  \\
                                        \hspace{4cm} $P_n\leftarrow P_n-\epsilon$ \hspace{0.3cm} \\
\hspace{0.3cm} - \hspace{1cm} end if  \\
\hspace{0.3cm} - \hspace{1cm} $p_n(k^*) \leftarrow P_n$ \\
\hspace{0.3cm} - \hspace{1cm} $p_n(0) \leftarrow 1-P_n$ \\
\hspace{0.3cm} - \hspace{0.3cm} end for\\
\hspace{0.3cm} - \hspace{0.3cm} until $|b_n(k)- e^{-1}|\leq \delta$ for all $n, k$\\\\
\hline
		\end{tabular}
	\label{tab:seq_algorithm}
\end{table}
\subsection{Parallel Updating}
\label{ssec:parallel}
The parallel algorithm is based on the observation that for a large number of users (and when A(1), A(2) hold) the maximal network throughput in multi-channel ALOHA networks approaches $Ke^{-1}$, where users transmit with probability $K/N$ \cite{Bai_Opportunistic_2006, Cohen_Game_2013}.
The parallel algorithm is described as follows. In the initialization step, all users set their transmission probability to $P^{(0)}_n=p_0$. In the learning step, all users monitor the channel utilization $v_n(k)$ for all $k=1, ..., K$ and compute $b_n(k)=\left(1-p_0\right)^{\hat{N}_n(k)}$. Hence, all users can estimate the number of users by:
\beq
\displaystyle \hat{N}_n=\sum_{k=1}^{K}{\hat{N}_n(k)}=\sum_{k=1}^{K}{\frac{\log(b_n(k))}{\log(1-p_0)}} \;.
\eeq
Then all users set their transmission probability:
\beq
\displaystyle P_n=\frac{K}{\hat{N}_n} \;.
\eeq
and implement the best-response dynamics, discussed in Section \ref{ssec:best}, with a given transmission probability $P_n$ . Theorem \ref{th:performance} shows that under $A(1), A(2)$, $b_n(k)\rightarrow e^{-1}$ for all $k$ as $N\rightarrow\infty$. The parallel updating mechanism is given in Table \ref{tab:par_algorithm}. \vspace{0.1cm} \\
\begin{remark}
Distributed algorithms for single-channel ALOHA networks under the fixed throughput demand $\eta_n$ of each user were proposed in \cite{Menache_Rate_2008, Jin_Equilibria_2002}. In each update step, the user sets $P_n=\eta_n/v_n$, until the algorithm converges. However, convergence is guaranteed only if the throughput demands are in the feasible region $\sum_{n=1}^{N}{\eta_n}\leq (1-1/N)^{(N-1)}$.
Hence, the parallel mechanism can be used to guarantee that the throughput demands are in the feasible region, by adjusting the throughput demands when the user population is changed randomly. \vspace{0.1cm}
\end{remark}
\begin{table}
\caption{Parallel updating algorithm}
\centering
\normalsize\begin{tabular}{|l|}
\hline
\\
\hspace{0.3cm} - \hspace{0.3cm} Initialize:\\
\hspace{0.3cm} - \hspace{0.3cm} for $n=1, ..., N$ users do:\\
\hspace{0.3cm} - \hspace{1cm} estimate $u_n(k)$ for all $k=1, ..., K$ \hspace{0.5cm} \\
\hspace{0.3cm} - \hspace{1cm} $k^* \leftarrow \arg \; \displaystyle\max_k \; \left\{ u_n(k) \right\}$ \\
\hspace{0.3cm} - \hspace{1cm} $P_n  \leftarrow p_0$ \\
\hspace{0.3cm} - \hspace{1cm} $p_n(k^*) \leftarrow P_n$ \\
\hspace{0.3cm} - \hspace{1cm} $p_n(0) \leftarrow 1-P_n$ \\
\hspace{0.3cm} - \hspace{0.3cm} end for\\\\
\hspace{0.3cm} - \hspace{0.3cm} for $n=1, ..., N$ users do:\\
\hspace{0.3cm} - \hspace{1cm} estimate $v_n(k)$ for all $k=1, ..., K$ \\
\hspace{0.3cm} - \hspace{1cm} compute $b_n(k)=(1-p_0)\cdot v_n(k)$ \\
                                        \hspace{4cm} for all $k=1, ..., K$ \hspace{0.3cm} \\
\hspace{0.3cm} - \hspace{1cm} compute $\hat{N}_n=\sum_{k=1}^{K}{\frac{\log(b_n(k))}{\log(1-P_n)}} $ \\
\hspace{0.3cm} - \hspace{0.3cm} end for\\\\
\hspace{0.3cm} - \hspace{0.3cm} for $n=1, ..., N$ users do:\\
\hspace{0.3cm} - \hspace{1cm} $P_n  \leftarrow K/\hat{N}_n$ \\
\hspace{0.3cm} - \hspace{1cm} $p_n(k^*) \leftarrow P_n$ \\
\hspace{0.3cm} - \hspace{1cm} $p_n(0) \leftarrow 1-P_n$ \\
\hspace{0.3cm} - \hspace{0.3cm} end for \\\\
\hspace{0.3cm} - \hspace{0.3cm} perform the best-response dynamics  \\
                 \hspace{1.0cm} with given $P_n$ until convergence \\\\
\hline
		\end{tabular}
	\label{tab:par_algorithm}
\end{table}

\subsection{Convergence of the Sequential and Parallel Updating Algorithms}

When applying the sequential and parallel updating algorithms, users can change their selected channels according to the dynamic load. In this section we show that the dynamics converge in finite time, starting from any point.

The following theorem establishes the convergence of the sequential updating algorithm. For purposes of analysis, we assume that users do not reduce their transmission probability to zero (thus, users with a high transmission probability should reduce their rates). Therefore, we assume that the transmission probability of every user is lower bounded by $P_n>\epsilon_p$ for some $0<\epsilon_p<<1$. \vspace{0.1cm}

\begin{theorem}
The sequential updating algorithm given in Table \ref{tab:seq_algorithm} converges to a NEP in finite time, starting from any point. \vspace{0.1cm}
\end{theorem}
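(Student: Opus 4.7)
The plan is to combine finiteness of the discretized state space with a strict-improvement argument that rules out cycles. First I would note that since each $P_n$ is restricted to $[\epsilon_p,1]$ and updated in steps of $\pm\epsilon$, the set of reachable values of every $P_n$ is finite; together with the $K$ possible channel choices per user, this confines the joint state $(k_1^*,\ldots,k_N^*,P_1,\ldots,P_N)$ to a finite set. Under the deterministic sequential dynamics, convergence in finite time is therefore equivalent to showing that the trajectory cannot cycle.

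Next I would separate the two kinds of moves an invoked user $n$ can make. A channel switch from $k_1$ to $k_2$ is triggered only when $\tilde{R}_n(k_2) > \tilde{R}_n(k_1)\bigl(1+\delta_R(n)\bigr)$, which is a strict improvement in the potential rate. By an argument parallel to Theorem \ref{th:ordinal}, where the probabilities on the $\epsilon$-grid play the role of the fixed $P_n$'s, each such switch strictly increases an ordinal potential $\phi(\mathbf{P})$; since $\phi$ is bounded above, only finitely many channel switches can occur along any trajectory. A probability adjustment of $P_n$ by $\pm\epsilon$, in isolation, is a bang-bang step that drives $|b_n(k_n^*)-e^{-1}|$ monotonically toward zero. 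After the last channel switch the partition $\{\mathcal{N}_k\}$ is frozen, and the sequential bang-bang updates on the $\epsilon$-grid drive each $b_n(k_n^*)$ into the $\delta$-neighborhood of $e^{-1}$ in a bounded number of rounds, at which point the stopping criterion is met and the resulting configuration is a NEP of the adaptive problem.

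The main obstacle is the interleaving: a probability adjustment by one user changes $v_i(k)$ for users sharing the same channel, potentially reviving a previously-forbidden channel switch, so the ordinal-potential argument must remain valid even as $\phi$ is evaluated with time-varying $P_n$'s. I would resolve this by evaluating $\phi$ on the joint space (channel assignments, $P_n$'s drawn from the $\epsilon$-grid) and observing that every admissible channel switch contributes a strict positive increment bounded below by a quantum that depends only on $\log(1+\delta_R(n))$ and on the finite grid resolution. Hence, interspersed probability moves can neither erase nor accumulate into a cycle of channel switches, and the total number of switches is bounded. Once switches cease, the Lyapunov argument for the $\pm\epsilon$ probability updates finishes the proof.
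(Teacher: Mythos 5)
Your potential-function route has a genuine gap at exactly the point you flag as the ``main obstacle,'' and your proposed resolution does not close it. The ordinal potential of Theorem~\ref{th:ordinal} certifies that a channel switch increases $\phi$ \emph{for fixed transmission probabilities}; here the $P_n$'s move on the $\epsilon$-grid between switches, and a single $\pm\epsilon$ adjustment by user $m$ changes $\tilde{p}_m=\log\left(1/(1-P_m)\right)$ and hence $L(k)$ on $m$'s channel, which perturbs the $\phi$-contribution of \emph{every} user sharing that channel with no controlled sign or magnitude. So even if each admissible switch gains a positive quantum, the interspersed probability moves can drive $\phi$ back down by more than that quantum, and boundedness of $\phi$ from above no longer bounds the number of switches; finiteness of the discretized state space then leaves cycles open, which is precisely what you needed to exclude. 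A second, independent problem is that the switch criterion compares the \emph{potential} rates $\tilde{R}_n(k)=\tilde{p}_n(k)u_n(k)v_n(k)$ with the channel-dependent factor $\tilde{p}_n(k)=\max\{1-e^{-1}/v_n(k),0\}$; this is not a best response of $\Gamma_{MCA}$ at fixed $P_n$, so the potential of Theorem~\ref{th:ordinal} does not track these moves even for a single isolated switch.

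The paper's proof uses none of this machinery; its engine is the feature (stated when the algorithm is introduced) that each user increases $\delta_R(n)$ from time to time. Since $\epsilon_p\le\tilde{p}_n(k)\le 1-e^{-1}$ and $P_n>\epsilon_p$ keeps $v_n(k)$ bounded away from zero, one obtains the uniform bounds $\max_k\tilde{R}_n(k)\le\left(1-e^{-1}\right)\max_k\left\{u_n(k)\right\}$ and $\tilde{R}_n(k^{*})\ge\epsilon_p^{N}\min_k\left\{u_n(k)\right\}$, valid for \emph{any} multi-strategy of the other users. Hence once $\delta_R(n)$ exceeds the fixed ratio $\delta_R^*(n)$ of (\ref{eq:delta_R_n}), the switching test can never again be passed; after this happens for all $n$ (in finite time, by the assumed growth of $\delta_R(n)$) the channel assignment is permanently frozen and the system is in equilibrium. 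Your argument never invokes the growth of $\delta_R(n)$, which is the one ingredient that actually makes the paper's proof go through; to repair your write-up you would either need to import that mechanism, or supply a genuinely new Lyapunov function that is monotone under \emph{both} types of moves, which the interleaving analysis above shows you do not currently have.
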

\begin{proof}
Assume that $N-1$ users play a multi-strategy matrix $\mathbf{P}_{-n}\in \mathcal{P}_{-n}$. Assume that user $n$ has computed the potential rates $\tilde{R}_n(k)\;,\; k=1, ..., K$ and wants to update its strategy. User $n$ will switch to a different channel only if
\begin{center}
$\displaystyle\max_k \; \left\{ \tilde{R}_n(k) \right\}>\tilde{R}_n(k^{*})\left(1+\delta_R(n)\right)$
\end{center}
holds. \\
Note that $\epsilon_p\leq \tilde{p}_n(k)\leq 1-e^{-1}$ for all $n, k$. Thus,
\begin{center}
$\displaystyle\max_k \left\{ \tilde{R}_n(k) \right\}\leq\left(1-e^{-1}\right)\max_k\left\{u_n(k)\right\}$
\end{center}
and
\begin{center}
$\displaystyle\tilde{R}_n(k^{*})\geq \epsilon_p^N \min_k\left\{u_n(k)\right\}$.
\end{center}
Let
\beq
\label{eq:delta_R_n}
\delta^*_R(n)
=\frac{\left(1-e^{-1}\right)\max_k\left\{u_n(k)\right\}}{\epsilon_p^N\min_k\left\{u_n(k)\right\}}.
\eeq
Then,
\begin{center}
$\displaystyle\max_k \; \left\{ \tilde{R}_n(k) \right\}<\tilde{R}_n(k^{*})\left(1+\delta^*_R(n)\right)$.
\end{center}
As a result, user $n$ will not switch strategy in the next iterations for any multi-strategy of the other users once $\delta_R(n)>\delta^*_R(n)$ (which occurs in finite time by increasing $\delta_R(n)$ from time to time). Once $\delta_R(n)>\delta^*_R(n)$ for all $n$ occurs, the entire system is in equilibrium. \vspace{0.1cm}
\end{proof}
It should be noted that practically, simulation results show fast convergence of the sequential updating algorithm for very small values of $\delta_R(n)$.

The following theorem establishes the convergence of the parallel updating algorithm. \vspace{0.1cm}
%
\begin{theorem}
The parallel updating algorithm given in Table \ref{tab:par_algorithm} converges to a NEP in finite time, starting from any point. \vspace{0.1cm}
\end{theorem}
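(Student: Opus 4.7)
The plan is to reduce the convergence of the parallel updating algorithm to the convergence result already established for the best-response dynamics in Section \ref{ssec:best}.

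First, I would observe that the parallel updating algorithm in Table \ref{tab:par_algorithm} consists of two phases: (i) an initialization/estimation phase in which each user, starting with the common probability $p_0$, measures $v_n(k)$, computes $b_n(k)$ and $\hat{N}_n$, and then locks in a final transmission probability $P_n = K/\hat{N}_n$; and (ii) a best-response phase in which users optimize their channel selection with the $P_n$'s now \emph{fixed}. The first phase terminates in a bounded number of steps by inspection of the pseudocode, since each quantity is produced by a single deterministic computation from the observed channel statistics.

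Next, I would argue that once phase (i) terminates, the remaining dynamics is precisely an instance of the non-cooperative multi-channel ALOHA game $\Gamma_{MCA}(K, P_1, \ldots, P_N)$ of Definition \ref{def:MCA}, with the particular probability constraints $P_n = K/\hat{N}_n$. The sequential channel-switching updates performed in the best-response phase therefore coincide with the best-response dynamics analyzed in Section \ref{ssec:best}. By Theorem \ref{th:ordinal}, this game is an ordinal potential game with the bounded ordinal potential $\phi(\mathbf{P})$ in (\ref{eq:ordinal_potential_function}); the Corollary to Theorem \ref{th:ordinal} then guarantees that every best-response sequence reaches a NEP in finitely many steps, starting from any initial multi-strategy matrix (in our case, the initial configuration inherited from phase (i) in which each user transmits on its highest-$u_n(k)$ channel with probability $P_n=K/\hat{N}_n$).

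Combining the two observations, the whole parallel updating algorithm terminates in finite time at a NEP of $\Gamma_{MCA}(K, P_1, \ldots, P_N)$ with $P_n = K/\hat{N}_n$, which is what we wanted to prove. The only conceptual step requiring care is confirming that the best-response iterations in phase (ii) do not modify the transmission probabilities $P_n$ themselves (only the choice of channel $k^{*}$), so that the potential-function argument of Theorem \ref{th:ordinal}, which takes the $P_n$'s as fixed parameters, applies verbatim; this is immediate from the pseudocode since $P_n$ is set once at the end of phase (i) and never overwritten afterwards. I do not expect any genuine obstacle here beyond this bookkeeping, as the heavy lifting has been done by the potential-game argument earlier in the paper.
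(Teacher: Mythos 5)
Your proposal is correct and matches the paper's own proof, which likewise notes that after the one-shot initialization each user fixes $P_n=K/\hat{N}_n$ and then runs the best-response dynamics of Section \ref{ssec:best}, so finite-time convergence to a NEP follows directly from Corollary 1. Your additional bookkeeping remark that the $P_n$'s are never overwritten in the best-response phase is a fair (if implicit in the paper) point and does not change the argument.
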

\begin{proof}
After the initialization step, all users set their transmission probability to $P_n=K/\hat{N}_n$. Then, all the users implement the best-response dynamics discussed in Section \ref{ssec:best} with a given transmission probability $P_n$. As a result, convergence is guaranteed in finite time, starting from any point by Corollary $1$.
\end{proof}

\subsection{Achieving the Global NBS via Best Response}
\label{ssec:performance}
In this section we examine the performance of the algorithms in the asymptotic regime (i.e., as $N\rightarrow\infty$, where $K$ is fixed). For purposes of analysis, we assume that $\epsilon$ and $p_0$ can be arbitrarily small when applying the sequential updating algorithm. Theorem \ref{th:performance} shows that under assumption $A(1),A(2)$, both the sequential and parallel updating algorithms maximize the global sum log rate of the network as $N$ increases. Theorem \ref{th:global_NBS} shows that the global NBS of the network is achieved in this case.
\vspace{0.1cm}
\begin{theorem}
\label{th:performance}
Assume that $A(1), A(2)$ hold. Then, applying the sequential and parallel updating algorithms given in Tables \ref{tab:seq_algorithm} and \ref{tab:par_algorithm} respectively, maximizes the sum log rate $\sum_{n=1}^{N}\log(R_n)$ as $N\rightarrow\infty$ with probability $1$. \vspace{0.1cm}
\end{theorem}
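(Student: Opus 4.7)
The plan is to decompose the global sum log rate across channels and show that both algorithms drive each channel's local sum log rate to its per-channel maximum, then combine with a balanced channel-load argument based on $A(1), A(2)$.

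First, I would observe that since every user selects a single channel under our restricted strategy set, we can write
\beq
\sum_{n=1}^{N}\log(R_n)=\sum_{k=1}^{K}\sum_{n\in\mathcal{N}_k}\log(R_n)=\sum_{k=1}^{K}S_k,
\eeq
so if each $S_k$ is asymptotically maximized, so is the global sum log rate. By Theorem \ref{th:b2}, $S_k$ is uniquely maximized by $P_n=1/N(k)$ for all $n\in\mathcal{N}_k$, which corresponds to $b(k)=(1-1/N(k))^{N(k)}\to e^{-1}$ as $N(k)\to\infty$. Hence it suffices to show (i) that $N(k)/N\to 1/K$ almost surely for every channel $k$, and (ii) that both algorithms asymptotically make each active user's transmission probability equal $1/N(k)$.

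For (i), note that in the initialization both algorithms assign every user to the channel $k_n^*=\arg\max_k u_n(k)$. By assumption $A(2)$, for each user $n$ the entries of the $n$-th row of $\widetilde{U}$ are identically distributed, so by symmetry $\Pr(k_n^*=k)=1/K$ for every $k$. By $A(1)$ the events $\{k_n^*=k\}_{n=1}^N$ are mutually independent. The strong law of large numbers then yields $N(k)/N\to 1/K$ almost surely. Subsequent channel switches in the sequential algorithm only involve a vanishing fraction of users once the load has equilibrated (since $\tilde p_n(k)\to 0$ on any channel that already carries $\approx N/K$ users), so the balanced allocation persists.

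For (ii), the parallel algorithm is immediate: every user forms $\hat N_n=\sum_k \log(b_n(k))/\log(1-p_0)$, which is a consistent estimator of the true population size after one observation round, and sets $P_n=K/\hat N_n\to K/N\approx 1/N(k)$ by part (i). For the sequential algorithm, the $\epsilon$-updates on $P_n$ are monotone responses to the sign of $b_n(k^*)-e^{-1}$, so in the limit $\epsilon,p_0\to 0$ the only fixed points are those at which $b(k)=e^{-1}$ on every channel that a user contemplates. Combined with the equal-share structure that emerges once channel switching stabilizes (via the $\delta_R(n)$ mechanism whose convergence was shown in the previous subsection), this forces $(1-P_n)\prod_{i\in\mathcal N_{k^*}\setminus\{n\}}(1-P_i)=e^{-1}$ simultaneously for each $n\in\mathcal N_{k^*}$, whose symmetric solution is $P_n=1/N(k^*)$.

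The main obstacle I anticipate is rigorously justifying that the sequential dynamics actually settle at the symmetric per-channel fixed point $P_n=1/N(k)$ rather than at some asymmetric fixed point that also satisfies $b(k)=e^{-1}$. I would address this by appealing to the strict concavity of $S_k$ in $\{P_n\}_{n\in\mathcal N_k}$ established in Theorem \ref{th:b2}: the potential-game convergence from Section \ref{ssec:best} combined with the $b(k)\to e^{-1}$ constraint restricts the limit set to the boundary of a strictly concave maximization whose unique interior maximizer is the symmetric point, ruling out asymmetric limits. Once both (i) and (ii) are in place, the per-channel maximizers of $S_k$ are attained in the limit, and summing over $k$ yields the maximum of $\sum_n\log(R_n)$ with probability one, as claimed.
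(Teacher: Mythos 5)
There is a genuine gap at the very first step of your argument. You write that since $\sum_{n}\log(R_n)=\sum_{k}S_k$, ``if each $S_k$ is asymptotically maximized, so is the global sum log rate.'' But Theorem \ref{th:b2} maximizes $S_k$ over the transmission probabilities $\{P_n\}_{n\in\mathcal{N}_k}$ \emph{for a fixed partition} $\{\mathcal{N}_k\}_{k=1}^K$ of users into channels. The global maximization of $\sum_n\log(R_n)$ is also over the channel assignment itself, and a different partition (different sets $\mathcal{N}_k$, hence different $N(k)$ and different collision-free utilities $u_n(k)$ entering $S_k$) could in principle yield a larger total. Your steps (i) and (ii) establish that the algorithms \emph{converge} to the configuration ``each user on its best channel, loads balanced at $N/K$, probabilities $\approx 1/N(k)$,'' but nowhere do you show that this configuration attains the supremum of $\sum_n \log(R_n)$ over all feasible assignments. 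The paper closes exactly this hole by first deriving an assignment-independent upper bound: it bounds $\log(u_n(k))\leq\log(u_n^*)$ with $u_n^*=\max_k u_n(k)$, writes $S\leq u^*+\sum_k f(N(k))$ with $f$ strictly concave in $N(k)$, and uses Jensen to show the bound is maximized at the balanced partition $N(k)=N/K$; only then does it verify that the algorithms asymptotically achieve this bound. You need this converse half (or an equivalent comparison across partitions) for the theorem to follow.

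A secondary, smaller issue: your proposed resolution of the asymmetric-fixed-point problem for the sequential algorithm does not quite work as stated. The stopping condition $b_n(k)\approx e^{-1}$ is a single scalar constraint per channel (every user's estimate $(1-P_n)v_n(k)$ equals the same quantity $b(k)$), so the fixed-point set $\{b(k)=e^{-1}\}$ is a codimension-one manifold containing many asymmetric profiles, and the $\epsilon$-dynamics do not ascend $S_k$, so strict concavity of $S_k$ by itself cannot restrict the limit to the symmetric point. The paper instead gets symmetry for free from the synchronized structure of the dynamics: all users start from the common $p_0$ and increment by the common $\epsilon$ in lockstep, so the profile remains symmetric along the whole trajectory and terminates at $P_n=K/N$ for all $n$. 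Your achievability ingredients (the SLLN argument for $N(k)/N\to 1/K$ under $A(1),A(2)$ and the consistency of $\hat N_n$ in the parallel scheme) do match the paper's.
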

\begin{proof}
We prove the theorem in two steps. First, we establish the upper bound on the sum log rate that can be achieved by any algorithm. Then, we show that the proposed algorithms achieve the bound in the asymptotic regime.

We use the same notation as in the proof of Theorem \ref{th:b2}. Substituting $P_n^*=1/N(k)$ in (\ref{eq:S_k}) yields:
\beq
\label{eq:S_k_bound2}
\bea{l}
\displaystyle S_k\leq\sum_{n\in\mathcal{N}_k}\left[\log(u_n^*)\right] \vspace{0.1cm} \\ \hspace{0.5cm}-N(k)\log(N(k))+\left(N(k)-1\right)N(k)\log\left(1-\frac{1}{N(k)}\right)
\;,
\ena
\eeq
where $u_n^*=\max_k\left(u_n(k)\right)$.\\
Let $S\triangleq\sum_{k=1}^{K}S_k$ be the sum log rate of the network. Hence\footnote{The bound holds for $N(k)\geq 2$ for all $k$. It can be verified that $N(k)\leq 2$ is not a valid solution to maximize the upper bound as $N$ increases.},
\beq
\label{eq:S_bound}
\bea{l}
\displaystyle S\leq\sum_{k=1}^{K}\sum_{n\in\mathcal{N}_k}\left[\log(u_n^*)\right] \vspace{0.1cm} \\
\displaystyle+\sum_{k=1}^{K}\left[-N(k)\log(N(k)) \right. \vspace{0.1cm} \\ \left. \hspace{2cm}
         \displaystyle+\left(N(k)-1\right)N(k)\log\left(1-\frac{1}{N(k)}\right)\right] \vspace{0.1cm} \\
\displaystyle=u^*+\sum_{k=1}^{K}f\left(N(k)\right)
\;,
\ena
\eeq
where $u^*\triangleq\sum_{n=1}^{N}\left[\log(u_n^*)\right]$ is a constant independent of $n, k$ and $f\left(N(k)\right)$ is a function of $N(k)$.  \\
It can be verified that the second derivative of $f\left(N(k)\right)$ with respect to $N(k)$ is strictly negative in its domain. Therefore, by the strict concavity of $f\left(N(k)\right)$, for any partition of $N$, $N(k)=\alpha_k N$, $k=1, ..., K$, such that $\sum_{k=1}^{K}\alpha_k=1$, we have: $\sum_{k=1}^{K}\frac{1}{K}f(\alpha_k N)\leq f(\frac{1}{K}\sum_{k=1}^{K}\alpha_kN)=f(N/K)$, where equality holds iff $\alpha_k=1/K$ for all $k$.
Therefore, maximizing the upper bound with respect to $N(k)$, $k=1, ..., K$ yields a solution $N^*(k)=N/K$ for all $k$. Substituting $N^*(k)$ in (\ref{eq:S_bound}) yields:
\beq
\label{eq:S_bound2}
\bea{l}
\displaystyle S\leq u^*+N\log\left(\frac{K}{N}\right)+N\left(\frac{N}{K}-1\right)\log\left(1-\frac{K}{N}\right)
\;.
\ena
\eeq

Next, to show that the parallel algorithm achieves this bound (\ref{eq:S_bound2}), it suffices to show the following: $1)$ the users transmit with probability $P_n=K/N$ for all $n$; $2)$ every user selects the channel with the highest collision-free utility $u_n^*$; $3)$ the number of users that transmit over every channel approaches $N/K$. In what follows we show that these three requirements hold in the asymptotic regime (i.e., as $N\rightarrow\infty$ and $K$ is fixed). Note that once the users have estimated the total number of users in the network $\hat{N}_n$, they set $P_n=K/\hat{N}_n$. Assuming that each user perfectly estimates the load on all the channels, then $P_n=K/N$ for all $n$. Thus, requirement $1$ holds. In the next step, the users perform the best-response dynamics with given $P_n=K/N$ for all $n$ until convergence.
Note that in the first iteration every user $n$ selects the channel with the highest collision-free utility.
Let $k_n^*=\arg\;\max_k\left(u_n(k)\right)$ and let
\beq\label{eq:indicator_tilda}
\mathbf{\tilde{1}}_n(k)=
\left\{ \begin{matrix}
1 & , & \mbox{if $k=k_n^*$}  \\
0 & , & \mbox{otherwise}
\end{matrix} \right.  \;,
\eeq
be the indicator function, which indicates whether user $n$ tries to access channel $k$ at the first iteration. \\
Let
\beq\label{N_k_tilda}
\tilde{N}(k)=\sum_{n=1}^{N}{\mathbf{\tilde{1}}_n(k)}
\end{equation}
be the number of users that access channel $k$ at the first iteration. \\
Since $u_n(k)$ are identically distributed across channels (due to assumption $A(2)$), we have: $Pr\left(k=k_n^*\right)=1/K \;\forall k \; \forall n$. Note that $\mathbf{\tilde{1}}_n(k)$ are also independent across users (from assumption $A(1)$).
Therefore, the strong law of large numbers implies that the sample average of $\mathbf{\tilde{1}}_n(k)$ converges almost surely to the expected value ($\E\left\{\mathbf{\tilde{1}}_n(k)\right\}=1/K$). Hence,
\begin{center}
$\tilde{N}(k)\overset{a.s}{\longrightarrow} N/K$ \; $\forall k$  \; as \; $N\rightarrow\infty$ \;.
\end{center}
Thus, requirement $2, 3$ hold in the first iteration. Let $\tilde{N}_n(k)=\tilde{N}(k)-\mathbf{\tilde{1}}_n(k)$ be the number of users that access channel $k$ at the first iteration except user $n$. In the next iterations, every user observes an equal load on every channel (assuming perfect monitoring) since $v_n(k)=\left(1-K/N\right)^{\tilde{N}_n(k)}\rightarrow \left(1-K/N\right)^{N/K}=e^{-1}$ \; $\forall n, k$ \; as \; $N\rightarrow\infty$ with probability $1$. As a result, the users will not switch in the next iterations and will operate in the desired operating point with probability $1$.

A similar argument applies to the sequential updating algorithm. In the initialization step, let $P_n=p_0=\alpha/N$ for some $0<\alpha<K$ for all $n$. Then, the load on every channel approaches a constant since $v_n(k)=\left(1-\alpha/N\right)^{\tilde{N}_n(k)}\rightarrow\left(1-\alpha/N\right)^{N/k}=e^{-\alpha/K}$ \; $\forall n, k$ \; as \; $N\rightarrow\infty$ with probability $1$. Let $\Delta v(t)=\max_{n, k}(v_n(k))-\min_{n, k}(v_n(k))$ at time $t$ and set $\epsilon=\epsilon'/N$ for small $\epsilon'>0$. Let $t_1$, $t_2$ be the time indices when all the users set $p_0$, $p_0+\epsilon$, respectively (we assume that during the sequential updating every user waits a fixed amount of time between adjacent updates). Thus, $\Delta v(t)\leq e^{-\alpha/K}-e^{-(\alpha+\epsilon')/K}\leq 1-e^{-\epsilon'/K}$ for all $\alpha>0$ for all $t_1\leq t\leq t_2$ as $N\rightarrow\infty$. Thus, for any fixed $\delta_R(n)>0$ there exists $\epsilon'>0$ such that the users will not switch to other channels. As a result, the sequential updating continues until every user updates its transmission probability on the channel with the highest collision-free utility to $P_n=K/N$ for all $n$ as $N\rightarrow\infty$ with probability $1$.
\end{proof}
\hspace{0.2cm}

Next, we show that the global NBS is achieved as $N\rightarrow\infty$. Note that when a selfish user increases its transmission probability to $1$ over its best channel to increase its rate, any other user will observe a zero rate on this channel. For the next theorem we assume that users that observe zero rates on all the channels transmit over the channel with the weakest interference (which can be sensed by the transmitter or the receiver, as discussed in Section \ref{sec:network}). This assumption is reasonable from a game theoretic perspective, since it encourages selfish users to cooperate, as shown in the proof of Theorem \ref{th:global_NBS} below. It is also reasonable from a system perspective. We also assume that the interference gain $|h_{i,j}(k)|$ that user $i$ causes to user $j$ on channel $k$ is bounded by $0<|h_{min}|\leq|h_{i,j}(k)|\leq|h_{max}|<\infty$ for all $i, j$ for all $k$.
\begin{theorem}
\label{th:global_NBS}
Assume that $A(1), A(2)$ hold. Let $\mathcal{N'}$ be the set of all the users in the network in (\ref{eq:NBS}). Then, applying the sequential and parallel updating algorithms given in Tables \ref{tab:seq_algorithm} and \ref{tab:par_algorithm} respectively, achieves the global NBS of the network as $N\rightarrow\infty$ with probability $1$. \vspace{0.1cm}
\end{theorem}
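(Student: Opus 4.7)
The plan is to reduce Theorem \ref{th:global_NBS} to Theorem \ref{th:performance} by pinning down the disagreement outcome $\mathbf{d}$ in (\ref{eq:NBS}). The first step is to argue that under assumptions $A(1), A(2)$, non-cooperative play yields $R_n^{min} \to 0$ for every user almost surely as $N\to\infty$. By the stated non-cooperative behavior, each selfish user transmits with probability $1$ on its best collision-free channel, and users that consequently observe a zero rate on every channel fall back to the weakest-interference channel. Under $A(2)$ the index $k_n^*=\arg\max_k u_n(k)$ is uniformly distributed on $\{1,\dots,K\}$, and under $A(1)$ these indices are independent across users, so the strong law of large numbers gives (exactly as in the proof of Theorem \ref{th:performance}) $\tilde{N}(k)/N \overset{a.s.}{\longrightarrow} 1/K$. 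Hence on every channel asymptotically $N/K\to\infty$ selfish users transmit with probability $1$, driving $v_n(k)\to 0$ and forcing each $R_n\to 0$.

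Once $R_n^{min}=0$ is established for all $n$, the product in (\ref{eq:NBS}) reduces to $\prod_n R_n$, so by monotonicity of the logarithm the global NBS coincides with the maximizer of $\sum_n \log R_n$ over the feasible rate region $R_c$. I would then invoke Theorem \ref{th:performance} directly: both the sequential and parallel updating algorithms maximize $\sum_n \log R_n$ as $N\to\infty$ with probability one, and therefore attain the global NBS.

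The main obstacle is the first step: verifying that the fallback rule — users with zero rate on every channel rerouting to the weakest-interference channel — does not inadvertently create a channel free of selfish transmitters on which some user could still earn a positive expected rate. This is where the bounded-interference hypothesis $0<|h_{min}|\leq|h_{i,j}(k)|\leq|h_{max}|<\infty$ is essential: it guarantees that every channel, including those reached via the fallback rule, is populated by other transmitters whose interference prevents any user from achieving a positive asymptotic rate, so the disagreement vector genuinely collapses to $\bm 0$ rather than to some strictly positive tuple that would shift the Nash product's optimizer. A secondary routine check is that the rate tuples produced by the two algorithms lie in $R_c$, which follows from feasibility of the limiting configuration $P_n=K/N$ with $N(k)=N/K$ identified in Theorem \ref{th:performance}.
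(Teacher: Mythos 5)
Your overall strategy is the same as the paper's: reduce the theorem to Theorem \ref{th:performance} by showing that the disagreement outcome in (\ref{eq:NBS}) collapses to $R_n^{min}=0$ for all $n$, after which the Nash product becomes $\prod_n R_n$ and maximizing $\sum_n \log R_n$ is exactly the NBS. You also correctly identify where the difficulty lies and which hypothesis is meant to resolve it. However, the decisive step is asserted rather than proved. Your first paragraph's SLLN argument only covers the initial simultaneous greedy play (each channel receiving roughly $N/K$ always-transmitting users); it does not control what happens after the fallback rule fires, which is precisely when a channel could end up occupied by a single always-transmitter who then earns a strictly positive rate, making $\mathbf{d}\neq\mathbf{0}$. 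You name this as ``the main obstacle'' and then state that the bound $0<|h_{min}|\leq|h_{i,j}(k)|\leq|h_{max}|<\infty$ ``guarantees that every channel \ldots is populated by other transmitters,'' but that is the conclusion to be proved, not an argument.

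The paper closes this gap with a concrete pigeonhole-and-migration argument that your proposal is missing: choose $M$ with $|h_{max}|/|h_{min}|<M$ and take $N>KM$; then some channel $k$ carries more than $M$ always-transmitting users, so the aggregate interference it inflicts satisfies $I_{k,n}>M|h_{min}|>|h_{max}|$, which exceeds the interference on any channel occupied by only a single user. Under the weakest-interference fallback rule, users therefore migrate from the overcrowded channel to any singly-occupied one, and this repeats until every channel hosts at least two users transmitting with probability $1$ --- at which point every user's non-cooperative rate is zero and $R_n^{min}=0$ follows. Without this (or an equivalent) argument, the reduction to Theorem \ref{th:performance} is not justified, because the optimizer of $\prod_n\left(R_n-R_n^{min}\right)$ need not coincide with the sum-log-rate maximizer when some $R_n^{min}>0$. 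Your secondary check (feasibility of the limiting configuration $P_n=K/N$, $N(k)=N/K$) is fine and matches the paper.
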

\begin{proof}
When users do not cooperate, every user transmits over the channel that yields the maximal achievable rate with a transmission probability equals to $1$. Therefore, after $K$ iterations, all the channels are occupied by $K$ users that always transmit. As a result, every user that updates its strategy at iteration $t>K$ observes a zero rate over all the channels. Then, it transmits over the channel with the weakest interference with a transmission probability equals to $1$ (to maximize the interference to the selfish users to encourage cooperation). Since $0<|h_{min}|\leq|h_{max}|<\infty$, there exists $M>0$ such that $|h_{max}|/|h_{min}|<M$. Let $N>KM$. Then, there exists a channel (say $k$) such that $N(k)>M$. Therefore, the interference $I_{k, n}$, that the users on channels $k$ cause to user $n$, is lower bounded by $I_{k,n}>M|h_{min}|>|h_{max}|$. Hence, if $M$ users transmit on channel $k$ and there is a channel which is occupied solely by a single user, in the next iteration user $n$ will not transmit on channel $k$. The same argument applies until at least two users transmit on every channel. As a result, $R_n^{min}=0$ for all $n$ (i.e., the global NBS is equivalent to maximizing the sum log rate of the network) for a sufficiently large $N$. The rest of the proof follows from Theorem \ref{th:performance}. \vspace{0.1cm}
\end{proof}

\begin{remark}
The advantages of the sequential mechanism are twofold. First, even if users start the dynamics with different transmission probabilities, they update their transmission probabilities to approach $b(k)=e^{-1}$. Second, in the case of a non-i.i.d matrix $U$, the users adjust their transmission probability according to the channel load. This property is important in common scenarios, such as when there is a significant interference on some channels, as discussed in Remark $1$. On the other hand, when users are synchronized and parallel updating can be applied, the parallel mechanism determines the required transmission probability in a single iteration. Then, convergence of the best-response dynamics with a given transmission probability is much faster. Hence, if $A(1), A(2)$ hold, this is a good solution, since it approaches the desired operating point as $N$ increases.
\end{remark}

\section{Simulation Results}\label{sec:simulations}
In this section we provide numerical examples to demonstrate the performance of the algorithms. First, we simulate the proposed best-response dynamics discussed in Section \ref{sec:rate_maximization}, for heterogenous networks, where transmission probabilities are given. We further simulate the proposed distributed algorithms discussed in Section \ref{sec:distributed}, for rate maximization, when users monitor the channel load to adjust their transmission probabilities. In all cases, the estimation of $v_n$ is based on a window of $100$ packets. We simulated Rayleigh fading channels, i.i.d across users and channels. The entries of the collision-free rate matrix $U$ were $u_n(k)=W\log(1+\mbox{SNR}\cdot |h_n(k)|^2)$ bps, where the channels' bandwidth $W$ was set to $10$MHz.
\subsection{Simulation of The Rate Maximization Under Given Transmission Probability Constraints}\label{ssec:sim_best}
In this section, we compared three algorithms: a random access algorithm where users pick a channel randomly, a totally greedy (TG) scheme where users pick the channel that maximizes their collision-free rates $u_n(k)$, and finally the proposed best-response dynamics discussed in Section \ref{ssec:best}. Transmission probabilities of the heterogenous users were uniformly distributed: $P_n\sim [0,2K/N]$ (note that $K/N$ is the desired transmission probability for rate maximization in a homogenous network). We initialized the best-response dynamics by the solution of the TG scheme. The achievable rates are presented as the ratio of the rate achieved by the random access algorithm.

In Fig. \ref{fig:rates} we present the average user rate gain of the best-response dynamics and the TG access scheme over the random access scheme as a function of the number of users for SNR$=0$dB, SNR$=10$dB, and $K=10$ channels. It can be seen that the average user rate achieved by the best-response dynamics significantly outperforms the average user rate achieved by all other algorithms. However, it approaches the TG scheme as $N$ increases, as discussed in Section \ref{sec:csi}. Note that the gain over the random access scheme decreases as the SNR increases. This is because the channel diversity gain decreases with SNR \cite{Brennan_Linear_1959}. For $N=20$ and SNR$=10$dB, the average number of iterations until convergence of the proposed best-response dynamics was $14$. \\
\begin{figure}[h]
\centering \epsfig{file=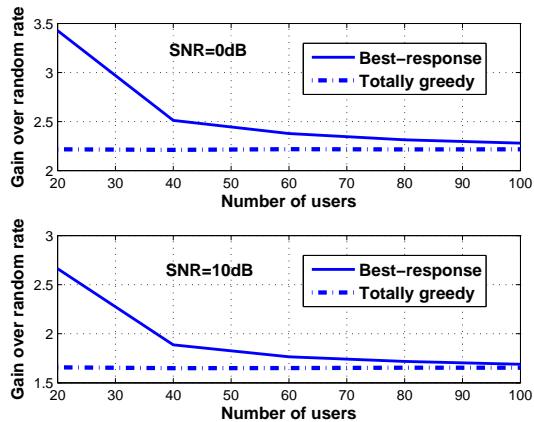,
width=0.48\textwidth}
\caption{Rate gain of the proposed best-response dynamics and the TG schemes over the random access scheme as a function of the number of users.}
\label{fig:rates}
\end{figure}

\subsection{Simulation of The Adaptive Rate Maximization}\label{ssec:sim_best}
In this section, we consider the case where users are not restricted by a transmission probability constraint, as discussed in Section \ref{sec:distributed}. Users maximize their rate, but still keep the desired load on the channels. In Fig. \ref{fig:convergence_seq} we present the convergence of the sequential updating algorithm, as shown in Table \ref{tab:seq_algorithm}, on a single channel (i.e., $K=1$) to the desired throughput $e^{-1}$. We also present the performance of the parallel scheme, given in Table \ref{tab:par_algorithm} in this case. In cases where parallel updating by all users can be implemented, this scheme is preferred on a single channel, since it only requires a single iteration.
\begin{figure}[h]
\centering \epsfig{file=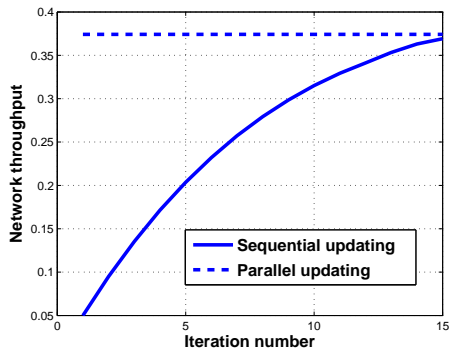,
width=0.4\textwidth}
\caption{Network throughput achieved by the sequential and parallel updating algorithms, given in Tables \ref{tab:seq_algorithm} and \ref{tab:par_algorithm}, for $N=30$.}
\label{fig:convergence_seq}
\end{figure}

Next, we illustrate the performance of the sequential updating mechanism given in Table \ref{tab:seq_algorithm}, in a multi-channel system. We simulated a common scenario where users transmit over channels $k=1, 2$ with SNR=$20dB$, and over channels $k=3, 4$ with SNR=$10dB$, due to significant interference in channels $k=3, 4$. We compare the algorithm performance for $\delta_R\rightarrow\infty$ (i.e., users transmit over the channel with the highest collision-free utility) and $\delta_R=0.1$ (i.e., users change channels only if their rates are improved by at least 10\%). We set $\delta_R$ to be equal for all users. The average rate and average log-rate as a function of the number of users are presented in Fig. \ref{fig:seq_N30_SNR_20_10_dB}. In Fig. \ref{fig:seq_N10_SNR_20_10_dB_convergence} we present the convergence of the algorithm for $N=10$ as a function of the number of iterations. In Fig. \ref{fig:seq_N10_SNR_20_10_dB_users} we present the average number of users that transmit over the inferior channels ($k=3, 4$). For $\delta_R\rightarrow\infty$, the average number of users that transmit over the inferior channels approaches zero. It can be seen in Fig. \ref{fig:seq_N30_SNR_20_10_dB} that implementing the sequential updating mechanism using $\delta_R=0.1$ (i.e., approaching the best-response dynamics) significantly outperforms the TG scheme (i.e., $\delta_R\rightarrow\infty$) in terms of both average rate (i.e., efficiency) and average log-rate (i.e., balancing between efficiency and fairness and approaching the NBS). As discussed in Section \ref{ssec:sequential} and can be seen in Fig. \ref{fig:seq_N10_SNR_20_10_dB_users}, low $\delta_R$ leads the users to use inferior channels when the load on good channels increases significantly. On the other hand, increasing $\delta_R$ leads to a high load on good channels and inefficient exploitation of the inferior channels.
\begin{figure}[h]
\centering \epsfig{file=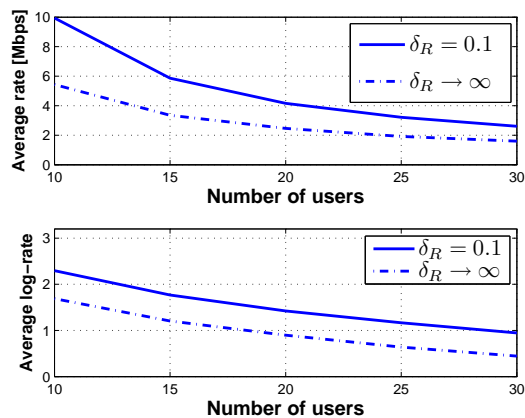,
width=0.48\textwidth}
\caption{Average user rate and log-rate achieved by the sequential updating algorithm, given in Table \ref{tab:seq_algorithm}.}
\label{fig:seq_N30_SNR_20_10_dB}
\end{figure}
\begin{figure}[h]
\centering \epsfig{file=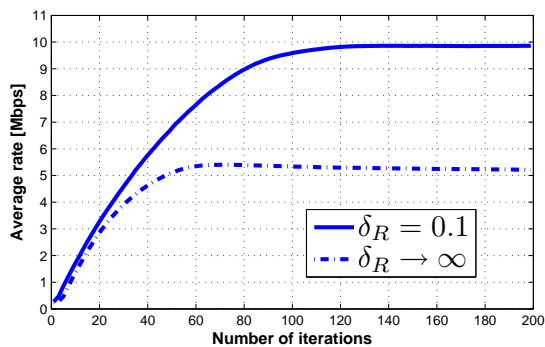,
width=0.5\textwidth}
\caption{Convergence of the sequential updating for $N=10$ as a function of the number of iterations. }
\label{fig:seq_N10_SNR_20_10_dB_convergence}
\end{figure}
\begin{figure}[h]
\centering \epsfig{file=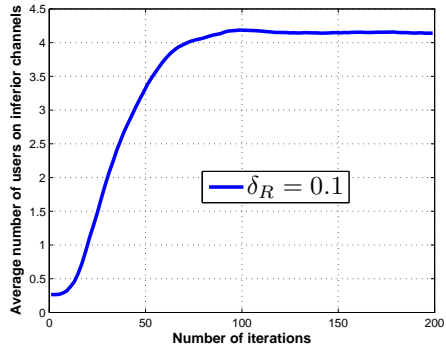,
width=0.4\textwidth}
\caption{Number of users that select the inferior channels by the sequential updating for $N=10$ as a function of the number of iterations.}
\label{fig:seq_N10_SNR_20_10_dB_users}
\end{figure}
\section{Conclusion}\label{sec:conclusion}
In this paper we examined the problem of distributed rate maximization in multi-channel ALOHA networks. We focused on networks containing a large number of users that transmit over a typically low number of channels.

First, we proposed a distributed best-response dynamics for the rate maximization problem. In this scheme, users exploit both CSI and the channel utilization to increase their rates. The convergence of the algorithm was proved for general heterogenous networks using the theory of potential games. We compared this scheme to the simple transmission scheme, where each user transmits over the channel with the highest collision-free utility.

Then, we considered the case where users can adjust their transmission probability to increase their rates. Adaptive distributed rate maximization was formulated to achieve both efficiency and fairness among users. We show that our approach plays an important role in achieving the Nash bargaining solution among users. We propose sequential and parallel algorithms to solve the optimization problem. The efficiencies of the algorithms were demonstrated through both theoretical and simulation results.

The model in this paper considered the saturated case, where users always have data to transmit. A future research direction is to examine more advanced queuing analysis for this model.
\bibliographystyle{ieeetr}

\begin{thebibliography}{10}

\bibitem{Zhao_Survey_2007}
Q.~Zhao and B.~Sadler, ``A survey of dynamic spectrum access,'' {\em IEEE
  Signal Processing Magazine}, vol.~24, no.~3, pp.~79--89, 2007.

\bibitem{Kyasanur_2009_Capacity}
P.~Kyasanur and N.~H. Vaidya, ``Capacity of multichannel wireless networks
  under the protocol model,'' {\em IEEE/ACM Transactions on Networking},
  vol.~17, no.~2, pp.~515--527, 2009.

\bibitem{Leshem_Multichannel_2012}
A.~Leshem, E.~Zehavi, and Y.~Yaffe, ``Multichannel opportunistic carrier
  sensing for stable channel access control in cognitive radio systems,'' {\em
  IEEE Journal on Selected Areas in Communications}, vol.~30, pp.~82--95, Jan.
  2012.

\bibitem{Cohen_A_Game_2012}
K.~Cohen, A.~Leshem, and E.~Zehavi, ``A game theoretic optimization of the
  multi-channel aloha protocol,'' in {\em International Conference on Game
  Theory for Networks (GameNets)}, pp.~77--87, 2012.

\bibitem{Cohen_Game_2013}
K.~Cohen, A.~Leshem, and E.~Zehavi, ``Game theoretic aspects of the
  multi-channel {ALOHA} protocol in cognitive radio networks,'' {\em IEEE
  Journal on Selected Areas in Comm.}, vol.~31, no.~11, pp.~2276 -- 2288, 2013.

\bibitem{Mo_2008_Comparison}
J.~Mo, H.-S. So, and J.~Walrand, ``Comparison of multichannel {MAC}
  protocols,'' {\em IEEE Transactions on Mobile Computing}, vol.~7, no.~1,
  pp.~50--65, 2008.

\bibitem{Shen_Stabilized_2002}
D.~Shen and V.~Li, ``Stabilized multi-channel {ALOHA} for wireless {OFDM}
  networks,'' {\em The IEEE Global Telecommunications Conference (GLOBECOM)},
  vol.~1, pp.~701--705, 2002.

\bibitem{Pountourakis_Analysis_1992}
I.~Pountourakis and E.~Sykas, ``Analysis, stability and optimization of
  {ALOHA-type} protocols for multichannel networks,'' {\em Computer
  Communications}, vol.~15, no.~10, pp.~619--629, 1992.

\bibitem{Bai_Aloha_2010}
F.~Bai, X.~He, and W.~Li, ``{ALOHA}-type random access in multi-channel
  multi-radio wireless networks,'' {\em International Conference on Networks},
  pp.~16--21, 2010.

\bibitem{Liew_2009_Bounded}
S.~C. Liew, Y.~Zhang, and D.~Chen, ``Bounded-mean-delay throughput and
  nonstarvation conditions in aloha network,'' {\em IEEE/ACM Transactions on
  Networking}, vol.~17, no.~5, pp.~1606--1618, 2009.

\bibitem{Ma_2009_Analysis}
R.~T. Ma, V.~Misra, and D.~Rubenstein, ``An analysis of generalized
  slotted-aloha protocols,'' {\em IEEE/ACM Transactions on Networking},
  vol.~17, no.~3, pp.~936--949, 2009.

\bibitem{Shi_Distributed_2008}
Y.~Shi and Y.~Hou, ``A distributed optimization algorithm for multi-hop
  cognitive radio networks,'' {\em IEEE INFOCOM}, pp.~1292--1300, 2008.

\bibitem{Ma_Centralized_2009}
Y.~Ma and D.~Kim, ``Centralized and distributed optimization of ad-hoc
  cognitive radio network,'' {\em IEEE Global Telecommunications Conference
  (GLOBECOM)}, pp.~1--7, 2009.

\bibitem{Wu_Distributed_2009}
Y.~Wu and D.~Tsang, ``Distributed power allocation algorithm for spectrum
  sharing cognitive radio networks with {QoS} guarantee,'' in {\em IEEE
  INFOCOM}, pp.~981--989, 2009.

\bibitem{Liu_2010_Distributed_ICASSP}
K.~Liu and Q.~Zhao, ``Distributed learning in cognitive radio networks:
  Multi-armed bandit with distributed multiple players,'' in {\em IEEE
  International Conference on Acoustics Speech and Signal Processing (ICASSP)},
  pp.~3010--3013, 2010.

\bibitem{Felegyhazi_Non_2007}
M.~Felegyhazi, M.~Cagalj, S.~Bidokhti, and J.~Hubaux, ``Non-cooperative
  multi-radio channel allocation in wireless networks,'' {\em IEEE
  International Conference on Computer Communications INFOCOM}, pp.~1442--1450,
  2007.

\bibitem{Altman_A_Potential_2009}
E.~Altman, A.~Kumar, and Y.~Hayel, ``A potential game approach for uplink
  resource allocation in a multichannel wireless access network,'' {\em
  Proceedings of the Fourth International ICST Conference on Performance
  Evaluation Methodologies and Tools}, p.~72, 2009.

\bibitem{Vallam_Non_2011}
R.~Vallam, A.~Kanagasabapathy, and C.~Murthy, ``A non-cooperative
  game-theoretic approach to channel assignment in multi-channel multi-radio
  wireless networks,'' {\em Wireless Networks}, vol.~17, no.~2, pp.~411--435,
  2011.

\bibitem{Choe_2010_OFDMA}
S.~Choe, ``{OFDMA} cognitive radio medium access control using multichannel
  {ALOHA},'' in {\em IEEE International Symposium on Wireless Communication
  Systems (ISWCS)}, pp.~244--249, 2010.

\bibitem{Bai_Opportunistic_2006}
K.~Bai and J.~Zhang, ``Opportunistic multichannel {Aloha}: distributed
  multiaccess control scheme for {OFDMA} wireless networks,'' {\em IEEE
  Transactions on Vehicular Technology}, vol.~55, no.~3, pp.~848--855, 2006.

\bibitem{To_2010_Exploiting}
T.~To and J.~Choi, ``On exploiting idle channels in opportunistic multichannel
  {ALOHA},'' {\em IEEE Communications Letters}, vol.~14, no.~1, pp.~51--53,
  2010.

\bibitem{Altman_2006_Survey}
E.~Altman, T.~Boulogne, R.~El-Azouzi, T.~Jim{\'e}nez, and L.~Wynter, ``A survey
  on networking games in telecommunications,'' {\em Computers \& Operations
  Research}, vol.~33, no.~2, pp.~286--311, 2006.

\bibitem{Menache_2011_Network}
I.~Menache and A.~Ozdaglar, ``Network games: Theory, models, and dynamics,''
  {\em Synthesis Lectures on Communication Networks}, vol.~4, no.~1,
  pp.~1--159, 2011.

\bibitem{Park_2012_Theory}
J.~Park and M.~van~der Schaar, ``The theory of intervention games for resource
  sharing in wireless communications,'' {\em IEEE Journal on Selected Areas in
  Communications}, vol.~30, no.~1, pp.~165--175, 2012.

\bibitem{Mackenzie_Selfish_2001}
A.~MacKenzie and S.~Wicker, ``Selfish users in {ALOHA}: a game-theoretic
  approach,'' {\em IEEE Vehicular Technology Conference}, vol.~3,
  pp.~1354--1357, 2001.

\bibitem{Jin_Equilibria_2002}
Y.~Jin and G.~Kesidis, ``Equilibria of a noncooperative game for heterogeneous
  users of an {ALOHA} network,'' {\em IEEE Communications Letters}, vol.~6,
  no.~7, pp.~282--284, 2002.

\bibitem{Meshkati_2006_Game}
F.~Meshkati, M.~Chiang, H.~V. Poor, and S.~C. Schwartz, ``A game-theoretic
  approach to energy-efficient power control in multicarrier {CDMA} systems,''
  {\em IEEE Journal on Selected Areas in Communications}, vol.~24, no.~6,
  pp.~1115--1129, 2006.

\bibitem{Menache_Rate_2008}
I.~Menache and N.~Shimkin, ``Rate-based equilibria in collision channels with
  fading,'' {\em IEEE Journal on Selected Areas in Communications}, vol.~26,
  no.~7, pp.~1070--1077, 2008.

\bibitem{Inaltekin_2008_Analysis}
H.~Inaltekin and S.~B. Wicker, ``The analysis of {Nash} equilibria of the
  one-shot random-access game for wireless networks and the behavior of selfish
  nodes,'' {\em IEEE/ACM Transactions on Networking}, vol.~16, no.~5,
  pp.~1094--1107, 2008.

\bibitem{Candogan_Competitive_2009}
U.~Candogan, I.~Menache, A.~Ozdaglar, and P.~Parrilo, ``Competitive scheduling
  in wireless collision channels with correlated channel state,'' {\em
  International Conference on Game Theory for Networks (GameNets)},
  pp.~621--630, 2009.

\bibitem{Chen_2010_Random}
L.~Chen, S.~H. Low, and J.~C. Doyle, ``Random access game and medium access
  control design,'' {\em IEEE/ACM Transactions on Networking}, vol.~18, no.~4,
  pp.~1303--1316, 2010.

\bibitem{Owen_Game_1995}
G.~Owen, {\em Game theory}, vol.~1.
\newblock Academic Press, third ed., 1995.

\bibitem{Nash_1950_Bargaining}
J.~Nash, ``The bargaining problem,'' {\em Econometrica}, pp.~155--162, 1950.

\bibitem{Leshem_2006_Brgaining}
A.~Leshem and E.~Zehavi, ``Bargaining over the interference channel,'' in {\em
  IEEE International Symposium on Information Theory}, pp.~2225--2229, 2006.

\bibitem{Boche_2007_Non}
H.~Boche, M.~Schubert, N.~Vucic, and S.~Naik, ``Non-symmetric nash bargaining
  solution for resource allocation in wireless networks and connection to
  interference calculus,'' in {\em Proc. European Signal Processing
  Conference}, 2007.

\bibitem{Jorswieck_2008_Miso}
E.~A. Jorswieck and E.~G. Larsson, ``The {MISO} interference channel from a
  game-theoretic perspective: {A} combination of selfishness and altruism
  achieves pareto optimality,'' in {\em IEEE International Conference on
  Acoustics, Speech and Signal Processing}, pp.~5364--5367, 2008.

\bibitem{Gao_2008_Game}
J.~Gao, S.~A. Vorobyov, and H.~Jiang, ``Game theoretic solutions for precoding
  strategies over the interference channel,'' in {\em IEEE Global
  Telecommunications Conference}, pp.~1--5, 2008.

\bibitem{Nokleby_2007_Cooperative}
M.~Nokleby, A.~L. Swindlehurst, Y.~Rong, and Y.~Hua, ``Cooperative power
  scheduling for wireless {MIMO} networks,'' in {\em IEEE Global
  Telecommunications Conference}, pp.~2982--2986, 2007.

\bibitem{Han_2005_Fair}
Z.~Han, Z.~Ji, and K.~R. Liu, ``Fair multiuser channel allocation for {OFDMA}
  networks using {Nash} bargaining solutions and coalitions,'' {\em IEEE
  Transactions on Communications}, vol.~53, no.~8, pp.~1366--1376, 2005.

\bibitem{Leshem_2008_Cooperative}
A.~Leshem and E.~Zehavi, ``Cooperative game theory and the {Gaussian}
  interference channel,'' {\em IEEE Journal on Selected Areas in
  Communications}, vol.~26, no.~7, pp.~1078--1088, 2008.

\bibitem{Leshem_2009_Game}
A.~Leshem and E.~Zehavi, ``Game theory and the frequency selective interference
  channel,'' {\em IEEE Signal Processing Magazine}, vol.~26, no.~5, pp.~28--40,
  2009.

\bibitem{Leshem_2010_Distributed}
A.~Leshem and E.~Zehavi, ``Distributed game theoretic optimization of frequency
  selective interference channels: {A} cross layer approach,'' in {\em IEEE
  Convention of Electrical and Electronics Engineers in Israel (IEEEI)}, 2010.

\bibitem{Leshem_2011_Smart}
A.~Leshem and E.~Zehavi, ``Smart carrier sensing for distributed computation of
  the generalized {Nash} bargaining solution,'' in {\em International
  Conference on Digital Signal Processing}, pp.~1--5, 2011.

\bibitem{Fudenberg_Game_1991}
D.~Fudenberg and J.~Tirole, {\em Game theory}.
\newblock MIT press Cambridge, Massachusetts, 1991.

\bibitem{Yu_Distributed_2002}
W.~Yu, G.~Ginis, and J.~Cioffi, ``Distributed multiuser power control for
  digital subscriber lines,'' {\em IEEE Journal on Selected Areas in
  Communications}, vol.~20, no.~5, pp.~1105--1115, 2002.

\bibitem{Scutari_asynchronous_2008}
G.~Scutari, D.~Palomar, and S.~Barbarossa, ``Asynchronous iterative
  water-filling for gaussian frequency-selective interference channels,'' {\em
  IEEE Transactions on Information Theory}, vol.~54, no.~7, pp.~2868--2878,
  2008.

\bibitem{Monderer_Potential_1996}
D.~Monderer and L.~Shapley, ``Potential games,'' {\em Games and economic
  behavior}, vol.~14, pp.~124--143, 1996.

\bibitem{Mavronicolas_Congestion_2007}
M.~Mavronicolas, I.~Milchtaich, B.~Monien, and K.~Tiemann, ``Congestion games
  with player-specific constants,'' {\em Mathematical Foundations of Computer
  Science}, pp.~633--644, 2007.

\bibitem{Cendrillon_Optimal_2006}
R.~Cendrillon, W.~Yu, M.~Moonen, J.~Verlinden, and T.~Bostoen, ``Optimal
  multiuser spectrum balancing for digital subscriber lines,'' {\em IEEE
  Transactions on Communications}, vol.~54, no.~5, pp.~922--933, 2006.

\bibitem{Yu_2011_Multicell}
W.~Yu, T.~Kwon, and C.~Shin, ``Multicell coordination via joint scheduling,
  beamforming and power spectrum adaptation,'' in {\em Proceedings IEEE
  INFOCOM}, pp.~2570--2578, 2011.

\bibitem{Gai_2011_Packet}
Y.~Gai, H.~Liu, and B.~Krishnamachari, ``A packet dropping-based incentive
  mechanism for {M/M/1} queues with selfish users,'' in {\em Proceedings IEEE
  INFOCOM}, pp.~2687--2695, 2011.

\bibitem{Bertsekas_1992_Data}
D.~Bertsekas and R.~Gallager, {\em Data networks}.
\newblock Englewood Cliffs, New Jersey: Prentice-Hall, 1992.

\bibitem{Brennan_Linear_1959}
D.~Brennan, ``Linear diversity combining techniques,'' {\em Proceedings of the
  IRE}, vol.~47, no.~6, pp.~1075--1102, 1959.

\end{thebibliography}

\end{document}